\renewcommand{\qed}{\hfill \ensuremath{\Box}}
\newtheorem{thm}{Theorem} 
\newtheorem{lem}{Lemma}
\newtheorem{prop}{Proposition}
\newcommand{\Heading}[1]{\paragraph{\bf{#1}}}
\newcommand{\NOP}{\mathit{nop}}
\newcommand{\INC}{\mathit{inc}}
\newcommand{\DEC}{\mathit{dec}}
\newcommand{\GOTO}{\mathit{goto}}
\newcommand{\IF}{\mathit{if}}
\newcommand{\ELSE}{\mathit{else}}
\newcommand{\Sup}[1]{\overline{#1}}
\newcommand{\Inf}[1]{\underline{#1}}
\newcommand{\Q}{\mathbb{Q}}
\newcommand{\Avg}{\mathit{avg}}
\newcommand{\LimAvg}{\mathit{LimAvg}}
\newcommand{\LimInfAvg}{\mathit{LimInfAvg}}
\newcommand{\LimSupAvg}{\mathit{LimSupAvg}}
\begin{document}
\title{Robust Multidimensional Mean-Payoff Games are Undecidable}
\author{Yaron Velner}
\institute{The Blavatnik School of Computer Science, Tel Aviv University, Israel}
\maketitle

\begin{abstract}
Mean-payoff games play a central role in quantitative synthesis and verification.
In a single-dimensional game a weight is assigned to every transition and the objective of the protagonist is to assure a non-negative limit-average weight.
In the multidimensional setting, a weight vector is assigned to every transition and the objective of the protagonist is to satisfy a boolean condition over the limit-average weight of each dimension, e.g., $\LimAvg(x_1) \leq 0 \vee \LimAvg(x_2)\geq 0 \wedge \LimAvg(x_3) \geq 0$.
We recently proved that when one of the players is restricted to finite-memory strategies then the decidability of determining the winner is inter-reducible with Hilbert's Tenth problem over rationals (a fundamental long-standing open problem).
In this work we consider arbitrary (infinite-memory) strategies for both players and show that the problem is undecidable.
\end{abstract}

\pagestyle{headings}

\section{Introduction}
Two-player games on graphs provide the mathematical foundation for the study of reactive systems.
In these games, the set of vertices is partitioned into player-1 and player-2  vertices;
initially, a pebble is placed on an initial vertex, and in every round, the player who owns the vertex that the pebble resides in, advances the pebble to an adjacent vertex.
This process is repeated forever and give rise to a \emph{play} that induces an infinite sequence of edges.
In the quantitative framework, an objective assigns a value to every play, and the goal of player 1 is to assure a value of at least $\nu$ to the objective.
In order to have robust quantitative specifications, it is necessary to investigate games on graphs with multiple (and possibly conflicting) objectives.
Typically, multiple objectives are modeled by multidimensional weight functions (e.g.,~\cite{BrazdilBCFK11,BrazdilCKN12,ChatterjeeRR12,Alur:2009:ODM:1532848.1532880}), and the outcome of a \emph{play} is a vector of values $(r_1,r_2,\dots,r_k)$.
A \emph{robust specification} is a boolean formula over the atoms $r_i \sim \nu_i$, for $\sim\in\{\leq,<,\geq,>\}$, $i\in\{1,\dots,k\}$ and $\nu_i\in\Q$.
For example, $\varphi = ((r_1 \geq 9 \vee r_2 \leq 9) \wedge r_3 < 0 \wedge r_4 > 9)$.
The most well studied quantitative metric is the mean-payoff objective, which assigns the limit-average (long-run average) weight to an infinite sequence of weights (and if the limit does not exist, then we consider the limit infimum of the sequence).
In this setting, $r_i$ is the limit-average of dimension $i$ of the weight function, and the goal of player~1 is to satisfy the boolean condition.
In this work we prove that determining whether player~1 can satisfy such a condition is undecidable.

\Heading{Related work.}
The model checking problem (one-player game) for such objectives (with some extensions) was considered in~\cite{Alur:2009:ODM:1532848.1532880,mean-payoff-Automaton-Expressions,BokerCHK11,TomitaHHY12,Velner12} and decidability was established.
Two-player games for restricted subclasses that contain only conjunction of atoms were studied in~\cite{abs-1209-3234,ChatterjeeRR12,TACAS13,ChatterjeeV13} and tight complexity bounds were obtained (and in particular, the problem was proved to be decidable).
In~\cite{VelnerR11} a subclass that contains disjunction and conjunction of atoms of the form $r_i\sim\nu_i$ for $\sim\in\{\geq, >\}$ was studied and decidability was shown.
In~\cite{Velner14} we considered a similar objective but restricted player-1 to play only with finite-memory strategies.
We showed that the problem is provably hard to solve and its decidability is inter-reducible with Hilbert's tenth problem over rationals --- a fundamental long standing open problem.
In this work we consider for the first time games with robust quantitative class of specifications that is closed under boolean union, intersection and complement with arbitrary (infinite-memory) strategies.

Undecidability for (single-dimensional) mean-payoff games was proved for partial information mean-payoff games~\cite{DegorreDGRT10} and for mean-payoff games that are played over infinite-state pushdown automata~\cite{ChatterjeeV12}.
These works did not exploit the different properties of the $\geq$ and $\leq$ operators (which correspond to the different properties of limit-infimum-average and limit-supremum-average).
To the best of our knowledge, the undecidability proof in the paper is the first to exploit these properties. (As we mentioned before, when we consider only the $\geq$ and $>$ operators, the problem is decidable.)

\Heading{Structure of this paper.}
In the next section we give the formal definitions for robust multidimensional mean-payoff games.
We prove undecidability by a reduction from the halting problem of a two-counter machine.
For this purpose we first present a reduction from the halting problem of a one-counter machine and then we extend it to two-counter machine.
In Section~\ref{sec:Reduction} we present the reduction and give an intuition about its correctness.
In Section~\ref{sec:DetailedProof} we give a formal proof for the correctness of the reduction and extend the reduction to two-counter machine.
In Section~\ref{sec:Discuss} we discuss the key elements of our proof and apply the undecidability result for the similar model of mean-payoff expressions.
\section{Robust Multidimensional Mean-Payoff Games}\label{sec:RobustGames}
\smallskip\noindent{\bf Game graphs.} 
A \emph{game graph} $G=((V,E),(V_1,V_2))$ consists of a \emph{finite} 
directed graph $(V,E)$ with a set of vertices $V$ a set of edges $E$, 
and a partition $(V_1,V_2)$ of $V$ into two sets.
The vertices in $V_1$ are {\em player-1 vertices}, where player~1 chooses the
outgoing edges, and the vertices in $V_2$ are {\em player~2 vertices},
where  player~2 (the adversary to player~1) chooses the outgoing edges.
We assume that every vertex has at least one out-going edge.

\smallskip\noindent{\bf Plays.} A game is played by two players: 
player~1 and player~2, who form an infinite path in the game graph by 
moving a token along edges.
They start by placing the token on an initial vertex, and then they
take moves indefinitely in the following way.
If the token is on a vertex in~$V_1$, then player~1 moves the token along
one of the edges going out of the vertex.
If the token is on a vertex in~$V_2$, then player~2 does likewise.
The result is an infinite path in the game graph, called {\em plays}.
Formally, a \emph{play} is an infinite sequence of vertices such that $(v_k,v_{k+1}) \in E$ for all $k \geq 0$. 

\smallskip\noindent{\bf Strategies.} 
A strategy for a player is a rule that specifies how to extend plays.
Formally, a \emph{strategy} $\tau$ for player~1 is a function 
$\tau$: $V^* \cdot V_1 \to V$ that, given a finite sequence of vertices 
(representing the history of the play so far) which ends in a player~1 
vertex, chooses the next vertex.
The strategy must choose only available successors.
The strategies for player~2 are defined analogously.

\smallskip\noindent{\bf Multidimensional mean-payoff objectives.} 
For multidimensional mean-payoff objectives we will consider game graphs 
along with a weight function $w: E \to \Q^k$ that maps each edge to a vector
of rational weights.
For a finite path $\pi$, we denote by $w(\pi)$ the sum of the weight vectors 
of the edges in $\pi$ and $\Avg(\pi) = \frac{w(\pi)}{|\pi|}$, 
where $|\pi|$ is the length of $\pi$, denote the average vector of the 
weights.
We denote by $\Avg_i(\pi)$ the projection of $\Avg(\pi)$ to the $i$-th 
dimension.
For an infinite path $\pi$, let $\pi_i$ denote the finite prefix of 
length $i$ of $\pi$; and we define
$\LimInfAvg_i(\pi) =\lim\inf_{i \to\infty} \Avg(\rho_i)$ and analogously 
$\LimSupAvg_i(\pi)$ with $\lim\inf$ replaced by $\lim\sup$.
For an infinite path $\pi$, we denote by 
$\LimInfAvg(\pi)= (\LimInfAvg_1(\pi),\dots,\LimInfAvg_k(\pi))$ 
(resp. $\LimSupAvg(\pi) = (\LimSupAvg_1(\pi),\dots,\LimSupAvg_k(\pi))$)  
the limit-inf (resp. limit-sup) vector of the averages (long-run average or 
mean-payoff objectives).
A multidimensional mean-payoff condition is a boolean formula over the atoms $\LimInfAvg_i \sim \nu_i$ for $\sim\in\{\geq, >,\leq , >\}$.
For example, the formula $\LimInfAvg_1 > 8 \vee \LimInfAvg_2 \leq -10 \wedge \LimInfAvg < 9$ is a possible condition and a path $\pi$ satisfies the formula if $\LimInfAvg_1(\pi) > 8 \vee \LimInfAvg_2(\pi) \leq -10 \wedge \LimInfAvg(\pi) < 9$.
We note that we may always assume that the boolean formula is positive (i.e., without negation), as, for example, we can always replace $\neg (r \geq \nu)$ with $r < \nu$.

For a given multidimensional weighted graph and a multidimensional mean-payoff condition, we say that player~1 is the winner of the game if he has a winning strategy that satisfy the condition against any player-2 strategy.

For an infinite sequence or reals $x_1,x_2,x_3,\dots$ we have $\LimInfAvg(x_1,x_2,\dots) = -\LimSupAvg(-x_1,-x_2,\dots)$.
Hence, an equivalent formulation for multidimensional mean-payoff condition is a positive boolean formula over the atoms $\LimInfAvg_i \sim \nu_i$ and $\LimSupAvg_i \sim \nu_i$ for $\sim\in\{\geq, >\}$.
For positive formulas in which only the $\LimInfAvg_i \sim\nu_i$ occur, determining the winner is decidable by~\cite{VelnerR11}.
In the sequel we abbreviate $\LimInfAvg_i$ with $\Inf{i}$ and $\LimSupAvg_i$ with $\Sup{i}$.
In this work we prove undecidability for the general case and for this purpose it is enough to consider only the $\geq$ operator and thresholds $0$.
Hence, in the sequel, whenever it is clear that the threshold is $0$, we abbreviate the condition $\Inf{i} \geq 0$ with $\Inf{i}$ and $\Sup{i} \geq 0$ with $\Sup{i}$.
For example, $\Inf{i}\vee \Sup{j} \wedge \Inf{\ell}$ stands for $\LimInfAvg_i \geq 0 \vee \LimSupAvg_j \geq 0 \wedge \LimInfAvg_\ell \geq 0$.
By further abuse of notation we abbreviate the current total weight in dimension $i$ by $i$ (and make sure that the meaning of $i$ is always clear from the context) and the absolute value of the total weight by $|i|$.

\section{Reduction from the Halting Problem and Informal Proof of Correctness}\label{sec:Reduction}
In this work we prove the undecidability of determining the winner in games over robust mean-payoff condition by a reduction from the halting problem of two-counter machine.
For this purpose we will first show a reduction from the halting problem of a one-counter machine to robust mean-payoff games, and the reduction from two-counter machines relays on similar techniques.
We first give a formal definition for a one-counter machine, and in order to simplify the proofs we give a non-standard definition that is tailored for our needs.   
A \emph{two-sided one-counter machine} $M$ consists of two finite set of control states, namely $Q$ (\emph{left states}) and $P$ (\emph{right states}), an initial state $q_0\in Q$, a final state $q_f\in Q$, a finite set of \emph{left to right} instructions $\delta_{\ell\to r}$ and a finite set of \emph{right to left} instructions $\delta_{r\to\ell}$.
An instruction determines the next state and manipulates the value of the counter $c$ (and initially the value of $c$ is $0$). 
A left to right instruction is of the form of either:
\begin{itemize}
\item $q: \IF c=0\ \GOTO\ p\ \ELSE\ c:= c - 1\ \GOTO\ p'$, for $q\in Q$ and $p,p'\in P$ ; or
\item $q: \GOTO\ p$, for $q\in Q$ and $p\in P$ (the value of $c$ does not change).
\end{itemize}
A right to left instruction is of the form of either
\begin{itemize}
\item $p: c:= c+1\ \GOTO\ q$, for $p\in P$ and $q\in Q$ ; or
\item $p: \GOTO\ q$, for a state $p\in P$ and a state $q\in Q$ (the value of $c$ does not change).
\end{itemize}
We observe that in our model, decrement operations are allowed only in left to right instructions and increment operations are allowed only in right to left instructions.
However, since the model allows state transitions that do not change the value of the counter (\emph{nop transitions}), it is trivial to simulate a standard one-counter machine by a two-sided counter machine.

For the reduction we use the states of the game graph to simulate the states of the counter machine and we use two dimensions to simulate the value of the one counter.
In the most high level view our reduction consists of three main gadgets, namely, reset, sim and blame (see Figure~\ref{fig:Overview}), and a state $q_f$ that represents the final state of the counter machine.
Intuitively, in the sim gadget player~1 simulates the counter machine, and if the final state $q_f$ is reached then player~1 loses.
If player~2 detects that player~1 does not simulate the machine correctly, then the play goes to the blame gadget.
From the blame gadget the play will eventually arrive to the reset gadget.
This gadget assign proper values for all the dimensions of the game that are suited for an honest simulation in the sim gadget.


\begin{figure}
\begin{center}
\begin{picture}(50,30)(0,0) 



\node[Nmarks=i,Nw=22,Nh=8,Nmr=3](RESET)(0,0){reset}
\rmark(RESET)

\node[Nw=22,Nh=8,Nmr=3](SIM)(35,25){sim}
\rmark(SIM)

\node[Nw=22,Nh=8,Nmr=3](BLAME)(35,0){blame}
\rmark(BLAME)

\node(Qf)(60,25){$q_f$}

\drawedge[sxo=-5.625,exo=-5.625](RESET,SIM){}
\drawedge(SIM,Qf){}
\drawedge(BLAME,RESET){}

\drawloop[loopangle=270](Qf){$x\gets -1$}

\drawedge[sxo=-5.625,exo=-5.625](SIM,BLAME){}
\drawedge[sxo=-1.875,exo=-1.875](SIM,BLAME){}
\drawedge[sxo=1.875,exo=1.875](SIM,BLAME){}
\drawedge[sxo=5.625,exo=5.625](SIM,BLAME){}

\end{picture}
\end{center} 
\caption{Overview.}\label{fig:Overview}
\end{figure}
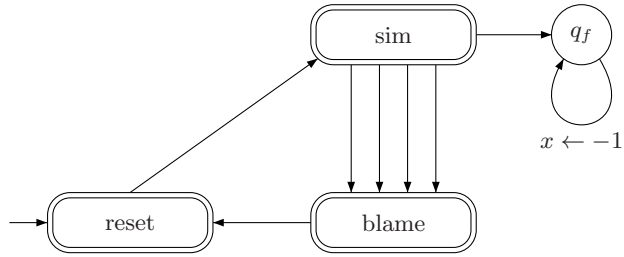


We now describe the construction with more details.
We first present the winning objective and then we describe each of the three gadgets.
For a two-sided counter machine $M$ we construct a game graph with 8 dimensions denoted by $\ell,r,g_s,c_+,c_-,g_c, x$ and $y$ and the objective
\[(\Inf{\ell} \wedge \Inf{r} \vee\Sup{g_s})\wedge (\Inf{c_+}\wedge \Inf{c_-} \vee \Sup{g_c})\wedge \Sup{x}\wedge \Sup{y}\]

\subsubsection{The sim gadget}
In the sim gadget player~1 suppose to simulate the run of $M$, and if the simulation is not honest, then player~2 activates the blame gadget.
The simulation of the states is straight forward (since the game graph has states), and the difficulty is to simulate the counter value, more specifically, to simulate the zero testing of the counter.
For this purpose we use the dimensions $r,\ell,g_s$ and $c_+,c_-,g_c$.

We first describe the role of $r,\ell$ and $g_s$.
The reset gadget makes sure that in every invocation of the sim gadget, we have
\[\Avg(g_s) \approx -1, \Avg(r) \approx 1\mbox{ and } \Avg(\ell) \approx 0\]
(The reader should read $a\approx b$ as "the value of $a$ is \emph{very close} to the value of $b$". Precise definitions are given in Section~\ref{sec:DetailedProof})
Then, during the simulation, the blame gadget makes sure that player~1 must play in such way that whenever the machine $M$ is in a right state, then
\[r \approx |g_s| \mbox{ and } \ell \approx 0\]
and whenever the machine is in a left state, then
\[ r \approx 0 \mbox{ and } \ell \approx |g_s|\]  
Intuitively, the role of $\ell$ and $r$ is to make sure that every left to right or right to left transition is simulated by a \emph{significant} number of rounds in the sim gadget, and $g_s$ is a \emph{guard} dimension that make sure that the above assumptions on $r$ and $\ell$ are satisfied.

We now describe the role of $c_+,c_-$ and $g_c$.
In the beginning of each simulation (i.e., every time that the sim gadget is invoked), we have
\[\Avg(c_+) \approx \Avg(c_-) \approx 1 \mbox{ and } \Avg(g_c) \approx -1 \]
During the entire simulation we have $\Avg(g_c) \approx -1$ and if $c$ is the value of the counter in the current simulation (i.e., since the sim gadget was invoked), then
\[c_+ \approx |g_c| + |g_s|c \mbox{ and } c_- \approx |g_c| - |g_s|c\]
Intuitively, whenever $c > 0$, then $c_- \lnapprox |g_c|$, and if $c < 0$ (this can happen only if player~1 is dishonest), then $c_+ \lnapprox |g_c|$.

We now describe the gadgets that simulate the operations of $\INC,\DEC$ and $\NOP$.
The gadgets are illustrated in Figures~\ref{fig:FirstNop}-\ref{fig:Inc} and the following conventions are used:
(i)~Player~1 owns the $\Circle$ vertices, player~2 owns the $\Square$ vertices, and the $\boxbox$ vertex stands for a gadget; (ii)~A transition is labeled either with $a\gets b$ symbol or with a text (e.g., "blame"). For a transition $e$ the label $a\gets b$ stands for $w_a(e) = b$. Whenever the weight of a dimension is not explicitly presented, then the weight is $0$.
We use text labels only to give intuition on the role of the transition. In such transitions the weights of all dimensions are $0$.

In the first state of every $\INC,\DEC$ or $\NOP$ gadget, in a left to right transition, player~1 loops until $\ell \approx 0$ and $r\approx g_s$, and in a right to left transition he loops until $r \approx 0$ and $\ell \approx g_s$.
Then, player~2 decides whether he wants to blame player~1 for violating the assumptions about the values of $\ell,r$ and $g_s$.
\begin{figure}
\begin{minipage}[t]{0.45\linewidth}
     \begin{center} 
    \begin{picture}(25, 20)(0,-20)
\node[Nmarks=i](DEC)(0,0){nop}
\drawloop[loopangle=270](DEC){$
    \begin{array}{{lcrlcr}}
r &\gets& -1, & \ell &\gets& 1 \\
c_+ &\gets& 1, & c_- &\gets& 1 \\ 
g_c &\gets& -1, & & &\\
    \end{array}
    $}

\node[Nmr=0,Nw=12,Nh=12](SIDE)(35,0){side?}
\node[Nw=22,Nh=6,Nmr=3](BLAME)(35,-20){blame $r \to \ell$}
\rmark(BLAME)

\node[Nmr=0,Nw=0,Nh=0](DUMMY)(60,0){}

\drawedge(DEC,SIDE){}
\drawedge(SIDE,BLAME){blame}
\drawedge(SIDE,DUMMY){ok}
\end{picture} 
\end{center}
\caption{nop $r\to\ell$ gadget.}\label{fig:FirstNop}
\end{minipage}
\hfill
\begin{minipage}[t]{0.45\linewidth}     
\begin{center} 
    \begin{picture}(25, 20)(0,-20)
\node[Nmarks=i](DEC)(0,0){nop}
\drawloop[loopangle=270](DEC){$
    \begin{array}{{lcrlcr}}
r &\gets& 1, &\ell &\gets& -1 \\
c_+ &\gets& 1, & c_- &\gets& 1 \\ 
g_c &\gets& -1, & & & \\
    \end{array}
    $}

\node[Nmr=0,Nw=12,Nh=12](SIDE)(35,0){side?}
\node[Nw=22,Nh=6,Nmr=3](BLAME)(35,-20){blame $\ell\to r$}
\rmark(BLAME)

\node[Nmr=0,Nw=0,Nh=0](DUMMY)(60,0){}

\drawedge(DEC,SIDE){}
\drawedge(SIDE,BLAME){blame}
\drawedge(SIDE,DUMMY){ok}

\end{picture} 
     \end{center}
\caption{nop $\ell\to r$ gadget.}
\end{minipage}
\end{figure}


\begin{figure}
\begin{minipage}[t]{0.45\linewidth}
     \begin{center} 
    \begin{picture}(25, 20)(0,-20)



\node[Nmarks=i](DEC)(0,0){dec}
\drawloop[loopangle=270](DEC){$
    \begin{array}{{lcrlcr}}
r &\gets& 1, & \ell &\gets& -1 \\
c_+ &\gets& 0, & c_- &\gets& 2 \\ 
g_c &\gets& -1, & & & \\
    \end{array}
    $}

\node[Nmr=0,Nw=12,Nh=12](SIDE)(35,0){side?}
\node[Nw=22,Nh=6,Nmr=3](BLAME)(35,-20){blame $\ell\to r$}
\rmark(BLAME)

\node[Nmr=0,Nw=0,Nh=0](DUMMY)(60,0){}

\drawedge(DEC,SIDE){}
\drawedge(SIDE,BLAME){blame}
\drawedge(SIDE,DUMMY){ok}

\end{picture} 
\end{center}
\caption{dec $\ell\to r$ gadget.}
\end{minipage}
\hfill
\begin{minipage}[t]{0.45\linewidth}     
\begin{center} 
    \begin{picture}(25, 20)(0,-20)

\node[Nmarks=i](DEC)(0,0){inc}
\drawloop[loopangle=270](DEC){$
    \begin{array}{{lcrlcr}}
r &\gets& -1, & \ell &\gets& 1 \\
c_+ &\gets& 2, & c_- &\gets& 0 \\ 
g_c &\gets& -1, &    &     & \\
    \end{array}
    $}

\node[Nmr=0,Nw=12,Nh=12](SIDE)(35,0){side?}
\node[Nw=22,Nh=6,Nmr=3](BLAME)(35,-20){blame $r \to \ell$}
\rmark(BLAME)

\node[Nmr=0,Nw=0,Nh=0](DUMMY)(60,0){}

\drawedge(DEC,SIDE){}
\drawedge(SIDE,BLAME){blame}
\drawedge(SIDE,DUMMY){ok}

\end{picture} 
     \end{center}
\caption{inc $r\to\ell$ gadget.}\label{fig:Inc}

\end{minipage}
\end{figure}
A transition $q: \IF\ c=0\ \GOTO\ p\ \ELSE\ c:= c - 1\ \GOTO\ p'$, for $q\in Q$ and $p,p'\in P$ is described in Figure~\ref{fig:ZeroTest}.

\begin{figure}
\begin{picture}(125,45)(-30,-10) 



\node(P)(90,23){$p$}

\node(Pprime)(90,3){$p '$}

\node[Nmarks=i](Q)(-30,13){$q$}

\node[Nmr=0,Nw=12,Nh=12](INC_A)(20,23){$c>0$?}

\node[Nw=22,Nh=6,Nmr=3](INC_A_BLAME)(-30,23){blame $c>0$}

\rmark(INC_A_BLAME)

\drawedge[curvedepth=-8,ELside=r](INC_A,INC_A_BLAME){blame}

\drawedge(Q,INC_A){declare $c=0$}

\node[Nw=22,Nh=6,Nmr=3](INC_GAD)(55,23){nop $\ell\to r$}
\rmark(INC_GAD)

\drawedge(INC_A,INC_GAD){ok}

\drawedge(INC_GAD,P){}

\node[Nw=22,Nh=6,Nmr=3](A)(20,3){dec $\ell\to r$}
\rmark(A)

\drawedge[ELside=r](Q,A){declare $c>0$}

\node[Nmr=0,Nw=12,Nh=12](C)(55,3){$c<0$?}

\node[Nw=22,Nh=6,Nmr=3](CdummyD)(-30,3){blame $c<0$}
\rmark(CdummyD)


\drawedge(A,C){}
\drawedge(C,Pprime){ok}
\drawedge[curvedepth=8,ELside=l](C,CdummyD){blame}


\end{picture} 
\caption{$q:$ if $c=0$ then goto $p$ else $c:= c - 1 $ goto $p '$}\label{fig:ZeroTest}
\end{figure}
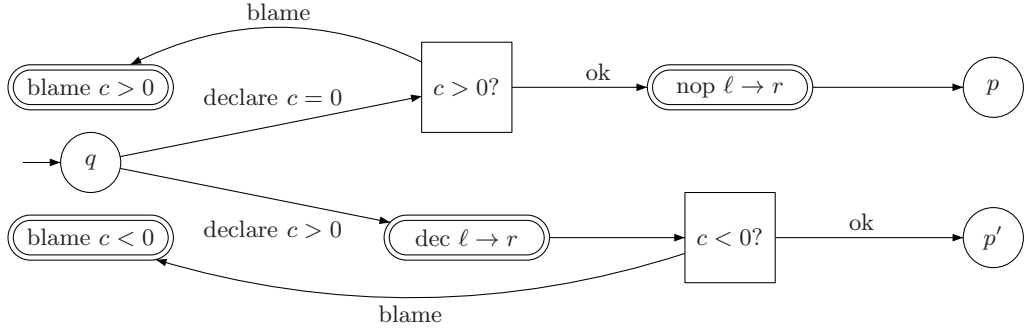

\subsubsection{The blame gadgets}
The role of the blame gadgets is to make sure that the assumptions on $\ell,r$ and $g_s$ are kept in the simulation and to make sure that the zero testing is honestly simulated.
There are four blame gadgets. Two for the honest simulation of $r,\ell$ and $g_s$, and two for the zero testing (one for $c>0$ and one for $c<0$).
The gadgets are described in Figures~\ref{fig:BlameCl0}-\ref{fig:BlameLtR}.
The concept of all four gadgets is similar and hence we describe only the blame gadget for the left to right transition.
We note that in an honest simulation we have $\Avg(r),\Avg(\ell),\Avg(c_+),\Avg(c_-),\Avg(x),\Avg(y) \gtrapprox 0$ in every round.
Hence, if player~1 honestly simulates $M$ and $M$ does not halt, then the winning condition is satisfied.
The blame gadget for the left to right transition is described in Figure~\ref{fig:BlameLtR}.
If the gadget is invoked and $r \lnapprox g_s$, then player~2 can loop on the first state until $\Avg(r) \lnapprox 0$ and still have $\Avg(g_s) \lnapprox 0$.
If $r \approx g_s$, then whenever we have $\Avg(r) \lnapprox 0$ we will also have $\Avg(g_s) \gtrapprox 0$, and thus the winning objective is still satisfied.
If $r \gnapprox g_s$, then we must have $\Avg(\ell) \lnapprox 0$. In this case, player~2 will immediately go to the reset gadget.
We note that player~2 should eventually exit the blame gadget, since otherwise he will lose the game.

\begin{figure}
\begin{minipage}[t]{0.45\linewidth}
     \begin{center} 
    \begin{picture}(25, -20)(0,-10)



\node[Nmarks=i,Nmr=0,Nw=6,Nh=6](TEST)(0,0){}

\drawloop[loopangle=270](TEST){$
    \begin{array}{{lcrlcr}}
c_- &\gets& -1,&
g_c &\gets& 1 \\
    \end{array}
    $}

\node[Nw=22,Nh=6,Nmr=3](RESET)(35,0){reset}
\rmark(RESET)

\drawedge(TEST,RESET){}

\end{picture} 
\end{center}
\caption{blame $c>0$ gadget.}
\end{minipage}
\hfill
\begin{minipage}[t]{0.45\linewidth}     
\begin{center} 
    \begin{picture}(25, -20)(0,-10)

\node[Nmarks=i,Nmr=0,Nw=6,Nh=6](TEST)(0,0){}

\drawloop[loopangle=270](TEST){$
    \begin{array}{{lcrlcr}}
c_+ &\gets& -1,&
g_c &\gets& 1 \\
    \end{array}
    $}

\node[Nw=22,Nh=6,Nmr=3](RESET)(35,0){reset}
\rmark(RESET)

\drawedge(TEST,RESET){}

\end{picture} 
     \end{center}
\caption{blame $c<0$ gadget.}\label{fig:BlameCl0}
\vspace{1cm}
\end{minipage}
\end{figure}


\begin{figure}
\begin{minipage}[t]{0.45\linewidth}
     \begin{center} 
    \begin{picture}(25, 8)(0,-12)

\node[Nmarks=i,Nmr=0,Nw=6,Nh=6](TEST)(0,0){}

\drawloop[loopangle=270](TEST){$
    \begin{array}{{lcrlcr}}
\ell &\gets& -1,&
g_s &\gets& 1 \\
    \end{array}
    $}

\node[Nw=22,Nh=6,Nmr=3](RESET)(35,0){reset}
\rmark(RESET)

\drawedge(TEST,RESET){}

    \end{picture}
\end{center}
  \caption{blame $r\to\ell$ gadget.}
\label{fig:G1}
\end{minipage}
\hfill
\begin{minipage}[t]{0.45\linewidth}     
\begin{center} 
    \begin{picture}(25, 8)(0,-12)

\node[Nmarks=i,Nmr=0,Nw=6,Nh=6](TEST)(0,0){}

\drawloop[loopangle=270](TEST){$
    \begin{array}{{lcrlcr}}
r &\gets& -1,&
g_s &\gets& 1 \\
    \end{array}
    $}

\node[Nw=22,Nh=6,Nmr=3](RESET)(35,0){reset}
\rmark(RESET)

\drawedge(TEST,RESET){}

    \end{picture}
     \end{center}
  \caption{blame $\ell\to r$ gadget.}\label{fig:BlameLtR}
\label{fig:G2}
\end{minipage}
\end{figure}
\subsubsection{The reset gadget}
The role of the reset gadget is to assign the following values for the dimensions:
\[\Avg(\ell) \approx 0, \Avg(r)\approx 1, \Avg(g_s) \approx -1, \Avg(c_-) \approx \Avg(c_+) \approx 1, \Avg(g_c) \approx -1\]
The gadget is described in Figure~\ref{fig:Reset}.
We construct the gadget is such way that each of the players can enforce the above values (player~2 by looping enough time on the first state, player~1 by looping enough time on his two states).
But the construction only gives this option to the players and it does not \emph{punish} a player if he acts differently.
However, the game graph is constructed in such way that if:
\begin{itemize}
\item $M$ does not halt and in the reset gadget (at least one of the players) correctly reset the values, then player~1 wins.
\item $M$ halts and in the reset gadget (at least one of the players) correctly reset the values, then player~2 wins.
\end{itemize}
Hence, if $M$ does halts, then player~2 winning strategy will make sure that the reset assigns correct values, and if $M$ halts, then we can relay on player~1 to reset the values.
We note that player~2 will not stay forever in his state (otherwise he will lose).
In order to make sure that player~1 will not stay forever in one of his states we introduce two \emph{liveness dimensions}, namely $x$ and $y$.
In the simulation and blame gadgets they get $0$ values.
But if player~1 remains forever in one of his two states in the reset gadget, then either $x$ or $y$ will have negative lim-sup value and player~1 will lose.
Hence, in the reset gadget, player~1 should not only reset the values, but also assign a positive value for $y$ and then a positive value for $x$.

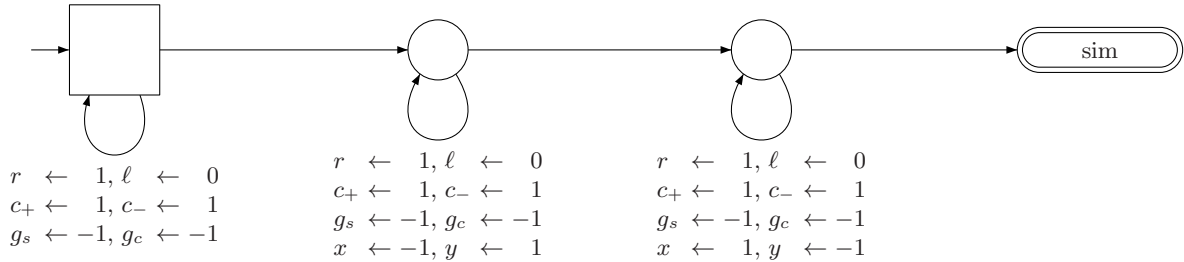
\begin{figure}
\begin{picture}(40,30)(0,-30) 



\node[Nmarks=i,Nmr=0,Nw=12,Nh=12](BAD)(0,0){}

\drawloop[loopangle=270](BAD){$
    \begin{array}{{lcrlcr}}
r &\gets & 1, & \ell &\gets & 0\\
c_+ & \gets & 1, & c_- & \gets & 1\\
g_s & \gets & -1, & g_c & \gets & -1\\
    \end{array}
    $}

\node(GOOD1)(43,0){}
\drawloop[loopangle=270](GOOD1){$
    \begin{array}{{lcrlcr}}
r &\gets & 1,& \ell &\gets & 0\\
c_+ & \gets & 1, &c_- & \gets & 1\\
g_s & \gets & -1,& g_c & \gets & -1\\
x&\gets&-1, & y&\gets&1\\
    \end{array}
    $}

\node(GOOD2)(86,0){}
\drawloop[loopangle=270](GOOD2){$
    \begin{array}{{lcrlcr}}
r &\gets & 1,& \ell &\gets & 0\\
c_+ & \gets & 1, & c_- & \gets & 1\\
g_s & \gets & -1, & g_c & \gets & -1\\
x&\gets&1, & y&\gets&-1\\
    \end{array}
    $}

\node[Nw=22,Nh=6,Nmr=3](SIM)(131,0){sim}
\rmark(SIM)

\drawedge(BAD,GOOD1){}
\drawedge(GOOD1,GOOD2){}
\drawedge(GOOD2,SIM){}

\end{picture} 
\caption{Reset gadget.}\label{fig:Reset}
\end{figure}

\subsubsection{Correctness of the reduction}
We claim that player~1 has a winning strategy if and only if the machine $M$ does not halt.
We first prove that if $M$ halts, then player~2 has a winning strategy (the proof is informal, and a formal proof is given in Section~\ref{sec:DetailedProof}).
The winning strategy for player~2 is as follows:
In the reset gadget make sure that the \emph{reset invariants} are satisfied. This is done by looping the first state of the reset gadget for enough rounds.
In the sim gadget, whenever the \emph{sim invariants} are not fulfilled or whenever player~1 cheats a zero-test, then player~2 invokes the blame gadget.
If the sim invariants are fulfilled, then it must be the case that the game reaches state $q_f$, and in that case player~2 wins.
Otherwise, in each simulation, the guard dimensions have negative average weights, while at least one of the dimensions $\ell,r,c_+$ or $c_-$ has a negative average weight in the blame gadget.
Hence, we get that $\Sup{g_s},\Sup{g_c} < 0$ and $\Inf{\ell} < 0$ or $\Inf{r}< 0$ or $\Inf{c_-} < 0$ or $\Inf{c_+} < 0$.
Hence, the winning condition is not satisfied and player~2 is the winner.

To prove the converse direction we assume that $M$ never halts and describe a winning strategy for player~1.
The winning strategy is to honestly simulate $M$ while keeping the \emph{sim invariants} and the \emph{reset invariants}.
If player~2 never invokes the blame gadget, then the winning condition is trivially satisfied (since the game stays forever in the sim gadget).
Otherwise, after every invocation of the blame gadget, if a \emph{side blame gadget} was invoked, then either the average value of $r$ and $\ell$ is non-negative or the value of the guard dimension $g_s$ is non-negative.
Hence, eventually, we get that $\Inf{r},\Inf{\ell}\geq 0$ or $\Sup{g_s}\geq 0$.
And similarly, we get that $\Inf{c_+},\Inf{c_-} \geq 0$ or $\Sup{g_c}\geq 0$.
Thus, the winning condition is satisfied, and player~1 is the winner.

\section{Detailed Proof}\label{sec:DetailedProof}
In the previous section we accurately described the reduction, and only the proof of the correctness was informal.
In this section we give a precise proof for the correctness of the reduction, namely, we formally describe player-2 winning strategy in the case that $M$ halts (Subsection~\ref{subsec:MHalts}), and player-1 winning strategy in the case that $M$ does not halt (Subsection~\ref{subsec:MDoesNotHalt}).
In Subsection~\ref{subsec:TwoCounter} we extend the reduction to two-counter machine.

In the next subsections we use the next notations and definitions:
A \emph{round} is a round in the game graph (i.e., either player-1 or player-2 move).
A \emph{simulation step} denote all the rounds that are played in a transition gadget (i.e., in a $\NOP$,$\INC$ or $\DEC$ gadget).
The \emph{total number of rounds} is the total number of rounds from the beginning of the play.

\subsection{If $M$ halts, then player~2 is the winner}\label{subsec:MHalts}
In this subsection we assume that $M$ halts.
We denote by $N$ the number of steps after which $M$ halts (for initial counter value $0$) and we denote $\epsilon = \frac{1}{(N+1)^2}$. WLOG we assume that $N > 10$.
The strategy of player~2 in the reset gadget is to achieve the following \emph{reset invariants} (after the play leaves the gadget):
\begin{itemize}
\item $\Avg(g_s), \Avg(g_c)\leq -\frac{1}{2}$
\item $(1- \frac{\epsilon}{4})|g_s| \leq r \leq (1 + \frac{\epsilon}{4})|g_s|$
\item $-\frac{\epsilon}{4}|g_s| \leq \ell \leq \frac{\epsilon}{4}|g_s|$
\item $(1-\frac{\epsilon}{4})|g_c| \leq c_+,c_- \leq (1 +\frac{\epsilon}{4})|g_c|$
\end{itemize}
We note that player~2 can maintain the above by looping sufficiently long time in the first state, and once the invariants are reached, player~1 cannot violate them in his states in the reset gadget (since the average value of $g_s$ and $g_c$ can only get closer to $-1$, the average value of $\ell$ only gets closer to $0$ and the average value of $r,c_-$ and $c_+$ only gets closer to $1$).

The strategy of player~2 in the sim gadget is to maintain, in every step of the simulation, the next two invariants, which we denote by the \emph{left right invariants}:
\begin{itemize}
\item (Left state invariant) If the machine is in a left state, then 
$(1- \epsilon)|g_s| \leq \ell \leq (1 + \epsilon)|g_s|$ and 
$-\epsilon|g_s| \leq r \leq \epsilon|g_s|$.
\item (Right state invariant) If the machine is in a right state, then
$(1- \epsilon)|g_s| \leq r \leq (1 + \epsilon)|g_s|$ and 
$-\epsilon|g_s| \leq \ell \leq +\epsilon|g_s|$
\end{itemize}
We denote $\delta = \frac{1}{\frac{1}{2} + N(1+2\epsilon)}$.
We first prove that under these invariants $\Avg(g_s) \leq -\delta$.
Then we use this fact to show that if player~1 violates these invariants, then player~2 can violate $(\Inf{\ell} \wedge \Inf{r} \vee\Sup{g_s})$, and therefore he wins.
\begin{lem}\label{lem:lfInvariantsImpAvgGs}
Suppose that after the reset gadget $\Avg(g_s)\leq -\frac{1}{2}$, then as long as the left right invariants are maintained, it always hold that $\Avg(g_s)\leq -\delta$ (provided that the sim gadget does not simulate more than $N$ steps).
\end{lem}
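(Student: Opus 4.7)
The plan is to exploit two structural facts: (i) the total $g_s$ weight is unchanged for the entire duration of the simulation phase, and (ii) the left--right invariants tightly bound the number of loop rounds in each simulation step in terms of the fixed quantity $G := |g_s|$ at the end of the reset. Let $T_0$ denote the total number of rounds played up to and including the reset gadget. The hypothesis rewrites as $-G/T_0 = \Avg(g_s)(T_0) \leq -1/2$, hence $T_0 \leq 2G$. Inspecting the loop weights in each of the nop, inc and dec gadgets of Figures~\ref{fig:FirstNop}--\ref{fig:Inc}, one sees that $g_s$ receives weight $0$ on every loop, and none of the edges into or out of the ``side?'' vertex, or of the ``$c>0$?''/``$c<0$?'' vertices in Figure~\ref{fig:ZeroTest}, carry any weight on $g_s$ either. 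Hence the cumulative $g_s$ weight stays equal to $-G$ throughout the simulation, and in particular $|g_s| = G$ remains constant.

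The heart of the proof is to bound the number of rounds in a single simulation step. Because decrement operations belong only to left-to-right instructions and increment operations only to right-to-left instructions, every simulation step toggles the side, so each step must move from one of the two left--right invariants to the other. I analyze a right-to-left step; the left-to-right case is symmetric. Each loop round of such a gadget (nop $r\to\ell$ or inc $r\to\ell$) increases $\ell$ by $1$ and decreases $r$ by $1$, so $\ell+r$ is preserved along the loop. If the step uses $k$ loop rounds, then the incoming right-state invariant gives $\ell_{\text{old}} \in [-\epsilon G, \epsilon G]$, while the outgoing left-state invariant requires $\ell_{\text{old}} + k \in [(1-\epsilon)G, (1+\epsilon)G]$. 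Subtracting yields $k \leq (1+2\epsilon)G$. Allowing a bounded constant number of non-loop rounds per step for the ``side?'' vertex and for the zero-test branching, the total number of rounds spent in at most $N$ simulation steps is at most $N(1+2\epsilon)G + O(N)$.

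Combining, at any instant $t$ during the simulation the cumulative $g_s$ weight equals $-G$ while the elapsed number of rounds is at most $T_0 + N(1+2\epsilon)G + O(N)$, so
\[
\Avg(g_s)(t) \;\leq\; \frac{-G}{T_0 + N(1+2\epsilon)G + O(N)} \;\leq\; -\delta,
\]
where the last step uses $T_0 \leq 2G$ and the stated definition of $\delta$, the $O(N)$ overhead being absorbed for $G$ sufficiently large (which player~2 may enforce by looping long enough in the first state of the reset). The main obstacle is the per-step round bound in the middle paragraph, which rests on the strict alternation between left and right states together with the conservation of $\ell+r$ along the loop; everything else is bookkeeping on top of the hypothesis.
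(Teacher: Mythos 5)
Your proof follows the paper's argument essentially verbatim: the key step in both is the per-step round bound of $(1+2\epsilon)|g_s|$ extracted from the left--right invariants (the paper tracks $r$ across a left-to-right step, you track $\ell$ across a right-to-left step), combined with the observations that $g_s$ is frozen throughout the sim gadget and that $\Avg(g_s)\leq -\frac{1}{2}$ bounds the number of pre-simulation rounds; your explicit handling of the $O(N)$ non-loop rounds is a minor extra care the paper omits. One bookkeeping remark: you correctly derive $T_0\leq 2G$, whereas the paper writes $R\leq \frac{|g_s|}{2}$ (the inequality goes the other way), so with the stated $\delta=\frac{1}{\frac{1}{2}+N(1+2\epsilon)}$ your final bound (and the paper's, once its slip is corrected) actually yields only $\Avg(g_s)\leq -\frac{1}{2+N(1+2\epsilon)}$ plus the vanishing overhead --- i.e.\ $\delta$ should be defined with a $2$ in place of the $\frac{1}{2}$, a change of constant that affects nothing downstream.
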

\begin{proof}
We denote by $R$ the number of rounds that were played before the current invocation of the simulation gadget.
We claim that after simulating $i$ steps of the machine (in the current invocation of the sim gadget), the \textbf{total} number of rounds in the play (i.e., number of rounds from the beginning of the play, not from the beginning of the current invocation) is at most $R + i\cdot |g_s|(1+2\epsilon)$.
The proof is by a simple induction, and for the base case $i=0$ the proof is trivial.
For $i > 0$, we assume WLOG that the $i$-th transition is a left-to-right transition.
Hence, before the last simulation step we had $r\geq -\epsilon |g_s|$ and after the $i$-th step was completed we have $r\leq (1+\epsilon)|g_s|$.
Since in every round the value of $r$ is incremented by $1$, we get that at most $(1+2\epsilon)|g_s|$ rounds were played and
the proof of the claim follows (and the proof for a right-to-left transition is symmetric).

Hence, after $N$ simulation steps we have
\[\Avg(g_s) \leq \frac{g_s}{R+N|g_s|(1+2\epsilon)}\]
Since in the beginning of the sim gadget we had $\Avg(g_s) \leq -\frac{1}{2}$, then $R \leq \frac{|g_s|}{2}$. Hence, and since $g_s < 0$ we get
\[\Avg(g_s) \leq \frac{g_s}{\frac{|g_s|}{2}+N|g_s|(1+2\epsilon)} = - \frac{1}{\frac{1}{2} + N(1+2\epsilon)} = -\delta \]

We note that unless the blame gadget is invoked, the value of $g_s$ is not changed in the simulation gadget. Hence, $\Avg(g_s)$ gets the maximal value after the $N$-th step and the proof is complete.
\qed
\end{proof}

\begin{lem}\label{lem:lfInvariantsImpLf}
Let $\gamma = \min(\frac{\epsilon\delta}{2},\frac{\frac{\epsilon}{4}}{1 + \frac{1}{\delta} - \frac{\epsilon}{4}})$.
If player~1 violates the left-right invariants in the first $N$ steps of a simulation, then player~2 can achieve in the blame gadget either $\Avg(r) \leq -\gamma$ or $\Avg(\ell) \leq -\gamma$ (or both) while maintaining $\Avg(g_s),\Avg(g_c)\leq -\gamma$.
\end{lem}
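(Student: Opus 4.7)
The plan is a budget argument. Just before the violating step Lemma~\ref{lem:lfInvariantsImpAvgGs} still applies, so if $T$ denotes the total number of rounds so far and $g_{s,0}$ the current total weight of $g_s$, then $|g_{s,0}|\geq \delta T$. This ``$g_s$-credit'' is precisely what player~2 will spend inside the blame gadget to drive one of the target averages below $-\gamma$ without letting $\Avg(g_s)$ climb above $-\gamma$.

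I would WLOG take the offending step to be a left-to-right transition, so that the attached blame gadget is the $\ell\to r$ one of Figure~\ref{fig:BlameLtR}, whose unique self-loop decrements $r$ by $1$ and increments $g_s$ by $1$ per round while leaving $\ell$ and $g_c$ untouched. Observe that $r+\ell$ is preserved by every round of the sim gadget and was pinned at $\approx |g_s|$ by the reset, so the right-state invariant is violated in exactly one of two symmetric ways: either (A) player~1 looped too long, producing $r_0>(1+\epsilon)|g_{s,0}|$ together with $\ell_0<-\epsilon|g_{s,0}|$; or (B) player~1 did not loop enough, producing $r_0<(1-\epsilon)|g_{s,0}|$ together with $\ell_0>\epsilon|g_{s,0}|$.

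In scenario~(A) player~2 exits the blame gadget without looping ($k=0$). Then $\Avg(\ell)=\ell_0/T\leq -\epsilon|g_{s,0}|/T\leq -\epsilon\delta\leq -\gamma$ by the first term $\epsilon\delta/2$ in the minimum defining $\gamma$, while $\Avg(g_s)\leq -\delta\leq -\gamma$ is inherited from the sim phase. In scenario~(B) player~2 loops the blame gadget $k$ times; writing $\Avg(r)=(r_0-k)/(T+k)$ and $\Avg(g_s)=(g_{s,0}+k)/(T+k)$, the two desired inequalities translate to $k\geq (r_0+\gamma T)/(1-\gamma)$ and $k\leq (|g_{s,0}|-\gamma T)/(1+\gamma)$. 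Plugging in the worst-case values $r_0=(1-\epsilon)|g_{s,0}|$ and $T=|g_{s,0}|/\delta$ and performing a short simplification exhibits a valid $k$ provided $\gamma$ is bounded by the second term $(\epsilon/4)/(1+1/\delta-\epsilon/4)$ of the minimum, as claimed. The $g_c$ constraint is free, since the $\ell\to r$ blame never touches $g_c$ while every sim and reset round has contributed $-1$ to it, giving $|g_c|\geq T$ and hence $\Avg(g_c)\leq -T/(T+k)\leq -\gamma$ for any reasonable $k$.

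The symmetric argument handles right-to-left offending steps through the $r\to\ell$ blame, interchanging the roles of $r$ and $\ell$ throughout. The only real obstacle is the algebraic check in scenario~(B) that the particular value of $\gamma$ in the lemma makes the two linear inequalities on $k$ compatible; everything else is a mechanical unfolding of Lemma~\ref{lem:lfInvariantsImpAvgGs} and of the linear dynamics of the blame loop.
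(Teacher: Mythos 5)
Your overall strategy is the same as the paper's: use Lemma~\ref{lem:lfInvariantsImpAvgGs} to certify a $g_s$-credit of $\delta T$, use conservation of $r+\ell$ to tie the two dimensions together, and then either exit the blame gadget immediately (when the partner dimension is already very negative) or loop to spend the credit. The scenario~(A)/(B) dichotomy and the linear-inequality check on $k$ are in substance the paper's cases.

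There is, however, a genuine gap in the case analysis. The left-right invariant consists of four separate bounds, and the conservation law is only approximate: the reset invariants pin $r+\ell$ between $(1-\frac{\epsilon}{2})|g_s|$ and $(1+\frac{\epsilon}{2})|g_s|$, so violating one bound by $\epsilon|g_s|$ forces only a \emph{half-strength} violation of its partner. Concretely, $r_0>(1+\epsilon)|g_{s,0}|$ yields only $\ell_0<-\frac{\epsilon}{2}|g_{s,0}|$ (harmless, since $-\frac{\epsilon\delta}{2}$ is exactly the first term of $\gamma$), but the reverse pairings fail: player~1 can violate $\ell\leq\epsilon|g_{s,0}|$ while keeping $r_0\geq(1-\epsilon)|g_{s,0}|$, since conservation only forces $r_0<(1-\frac{\epsilon}{2})|g_{s,0}|$, and symmetrically can violate $\ell\geq-\epsilon|g_{s,0}|$ without $r_0$ exceeding $(1+\epsilon)|g_{s,0}|$. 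So the violation does \emph{not} occur in ``exactly one of two symmetric ways,'' and your scenario~(B) check instantiated at the ``worst case'' $r_0=(1-\epsilon)|g_{s,0}|$ does not cover the violation $\ell_0>\epsilon|g_{s,0}|$ with $r_0=(1-\frac{\epsilon}{2})|g_{s,0}|$, where the available margin for the budget argument is halved. This is precisely why the paper runs a four-case analysis and why the second term of $\gamma$ is $\frac{\epsilon/4}{1+1/\delta-\epsilon/4}$ rather than $\frac{\epsilon/2}{1+1/\delta-\epsilon/2}$: the cases in which only a bound on the ``small'' dimension is breached must enter the budget computation with margin $\frac{\epsilon}{2}$ instead of $\epsilon$. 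The argument is repairable with your own tools --- split into the four bounds and rerun the inequality on $k$ with the degraded worst case --- but as written both the exhaustiveness claim and the worst-case instantiation are wrong.
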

\begin{proof}
We prove the assertion for the left-state invariant and the proof for the right-state is symmetric.
Recall that the invariant consistences of four assumptions, namely, (i)~$(1- \epsilon)|g_s| \leq \ell$; (ii)~$\ell \leq (1 + \epsilon)|g_s|$; (iii)~$-\epsilon|g_s| \leq r$; and (iv)~$r \leq \epsilon|g_s|$.
We first prove the assertion when the first condition is violated, i.e., we assume that $\ell < (1 -\epsilon)|g_s|$.
If this is the case after a right-to-left transition, then player~2 will invoke the blame $r\to\ell$ gadget after the transition ends.
In the blame gadget he will traverse the self-loop for $X\cdot(1 - \frac{\epsilon}{2})$ times, where $X$ is the value of $|g_s|$ before the invocation of the blame gadget, and then he will go to the reset gadget.
As a result (since in every loop $\ell$ is decremented by $1$ and $g_s$ is incremented by $1$) we get that the value of $\ell$ and $g_s$ is at most $-X\cdot\frac{\epsilon}{2}$.
By Lemma~\ref{lem:lfInvariantsImpAvgGs} we know that before the invocation of the blame gadget we had $\Avg(g_s) \leq -\delta$.
Hence, if $R$ is the number of rounds before the invocation of the blame gadget, then $R \leq \frac{X}{\delta}$.
Hence, after the blame gadget ends, we have 
\[ \Avg(\ell),\Avg(g_s) \leq -\frac{X\cdot\frac{\epsilon}{2}}{R + X\cdot(1 - \frac{\epsilon}{2})} \leq -\frac{X\cdot\frac{\epsilon}{2}}{\frac{X}{\delta} + X\cdot(1 - \frac{\epsilon}{2})} = -\frac{\frac{\epsilon}{2}}{1 + \frac{1}{\delta} - \frac{\epsilon}{2}}\]

If the second condition is violated, namely, if $\ell > (1 + \epsilon)|g_s|$, then we claim that it must be the case that $r < -\frac{\epsilon|g_s|}{2}$.
Indeed, when the sim gadget is invoked we have $r \leq |g_s|(1 +\frac{\epsilon}{4})$ and $\ell \leq |g_s|\frac{\epsilon}{4}$.
In the sim gadget the value of the sum $r + \ell$ is not changed (since $r$ is incremented only when $\ell$ is decremented and vise versa).
Hence, the sum never exceeds $|g_s|( 1 + \frac{\epsilon}{2})$.
Thus, if $\ell > (1 + \epsilon)|g_s|$, then it must be the case that $r < -\frac{\epsilon|g_s|}{2}$ and we get that $\Avg(r) \leq -\frac{\epsilon}{2} \delta$.
Hence, by leaving the blame gadget after two rounds we get $\Avg(r) \leq -\frac{\epsilon\delta}{2}$ while $\Avg(g_s)\leq -\delta$.

If the third condition is violated, namely, if $r < -\epsilon|g_s|$, then $\Avg(r) = -\epsilon|\Avg(g_s)| \leq -\epsilon\delta$.
Hence, if player~2 can choose to exit the blame gadget after two rounds and we get $\Avg(r) \leq -\epsilon\delta$ while $\Avg(g_s)\leq -\delta$.

Finally, if the fourth condition is violated, namely, if $r > \epsilon|g_s|$, then by analyzing the sum $r+ \ell$ we get that $\ell \leq (1 - \frac{\epsilon}{2})|g_s|$.
We repeat the same analysis as in the case where the first invariant is violated (i.e., when $\ell \leq (1 - \epsilon)|g_s|$) and get that
\[\Avg(g_s),\Avg(r) \leq 
-\frac{\frac{\epsilon}{4}}{1 + \frac{1}{\delta} - \frac{\epsilon}{4}}\]

It is an easy observation that if at beginning of the last invocation of the sim gadget $\Avg(g_c) \leq -\frac{1}{2}$, then it remains at most $-\frac{1}{2}$ as it gets a value of $-1$ in every round.
The proof is complete. \qed
\end{proof}
By Lemma~\ref{lem:lfInvariantsImpLf}, if player~2 maintains the reset invariant in the reset gadget, then in every simulation player~1 must satisfy the left-right invariants. Otherwise, we get that infinitely often the average value of either $r$ or $\ell$ is $-\gamma$ while the average value of $g_s$ is always at most $-\gamma$. Hence $\Sup{g_s} < 0$ and either $\Inf{r} < 0$ or $\Inf{\ell} < 0$ and thus the condition $(\Inf{\ell} \wedge \Inf{r} \vee\Sup{g_s})$ is violated and therefore player~1 is losing.

In the next three lemmas we prove that player~1 must honestly simulates the zero-testing.
The first lemma is a simple corollary from the left-right invariants.
\begin{lem}\label{lem:LRNumRounds}
Under the left-right invariants, in the $\DEC$,$\INC$ and $\NOP$ gadgets, player~1 follows the self-loop of the first state at most $|g_s|(1+2\epsilon)$ rounds and at least $|g_s|(1-2\epsilon)$ rounds.
\end{lem}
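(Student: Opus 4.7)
The plan is a direct computation using the invariants at the two endpoints of the simulation step. Consider a left-to-right transition (a $\DEC\ \ell \to r$, $\NOP\ \ell\to r$, or the analogous case inside a zero-test). Inspecting Figures~\ref{fig:FirstNop}--\ref{fig:Inc}, each iteration of the self-loop in the first state adds $+1$ to $r$ and $-1$ to $\ell$ (in the right-to-left case the roles are swapped, and the argument is symmetric). In particular, if player~1 performs $k$ iterations of this self-loop during the simulation step, then $r$ increases by exactly $k$ and $\ell$ decreases by exactly $k$, while the values of $g_s,g_c,c_+,c_-$ are unaffected by whether we measure this change in terms of $r$ or $\ell$.

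Next I would apply the left-right invariants at the two endpoints. Just before the step we are in a left state, so $-\epsilon|g_s|\leq r\leq \epsilon|g_s|$. Just after the step we are in a right state, so $(1-\epsilon)|g_s|\leq r\leq (1+\epsilon)|g_s|$. Subtracting the bounds on the starting value of $r$ from the bounds on the ending value of $r$ yields
\[(1-\epsilon)|g_s|-\epsilon|g_s|\;\leq\;k\;\leq\;(1+\epsilon)|g_s|-(-\epsilon|g_s|),\]
that is $(1-2\epsilon)|g_s|\leq k\leq (1+2\epsilon)|g_s|$, which is exactly the claim. The analogous computation using $\ell$ (or, for a right-to-left step, using $\ell$ before and $r$ after) gives the same bounds by symmetry, so the statement holds for all three gadget types and both orientations.

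There is no real obstacle here; the lemma is essentially bookkeeping. The only mild subtlety is to make sure that the invariants actually apply at the moment I use them: the left-right invariants are guaranteed by hypothesis to hold at every simulation step, so they are in force both at the entry to the current gadget (in the left/right state dictated by the machine's control state) and at the exit (in the opposite state), which is precisely what the argument requires. \qed
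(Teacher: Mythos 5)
Your proof is correct and is essentially the paper's own argument: both apply the left-right invariants at the entry (left state) and exit (right state) of the simulation step and use the fact that the tracked dimension changes by exactly $1$ per iteration of the self-loop. The only cosmetic difference is that you track $r$ where the paper tracks $\ell$; the resulting bounds are identical.
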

\begin{proof}
Upper bound:
in a left to right transition, we have $\ell \leq |g_s|(1+\epsilon)$ before the transition and $\ell \geq -|g_s|\epsilon$ after the transition, and in every round $\ell$ is decremented by $1$.
Lower bound:
in a left to right transition, we have $\ell \geq |g_s|(1-\epsilon)$ before the transition and $\ell \leq |g_s|\epsilon$ after the transition, and in every round $\ell$ is decremented by $1$.

The proof for a right to left transition is symmetric.
\end{proof}
The next lemma shows the correlation between $g_c$ and $c_+$ and $c_-$.
\begin{lem}\label{lem:CorrelationOfCountersDim}
Let $\#\INC$ (resp., $\#\DEC$) be the number of times that the $\INC$ ($\DEC$) gadget was visited, and we denote $c = \#\INC - \#\DEC$ (namely, $c$ is the actual value of the counter in the counter machine $M$).
Then under the left-right invariants, in the first $N$ steps of the simulation we always have $c_+ \leq |g_c|(1+\epsilon) + c|g_s| + \frac{|g_s|}{2}$ and $c_- \leq |g_c|(1+\epsilon) - c|g_s| + \frac{|g_s|}{2}$.
\end{lem}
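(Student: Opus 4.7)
\medskip\noindent\textbf{Proof plan for Lemma~\ref{lem:CorrelationOfCountersDim}.}
The plan is to induct on the number $k$ of completed simulation steps and to track the quantity $c_+ - |g_c|$, showing it never grows faster than $c\,|g_s|$ plus a small slack. I would first observe the key structural fact that among the simulation gadgets (NOP, INC, DEC in Figures~\ref{fig:FirstNop}--\ref{fig:Inc}) none touches $g_s$, so during the sim gadget $|g_s|$ is constant; denote its value by $G_s$. In contrast, $g_c$ gets $-1$ on every round of every self-loop, so $|g_c|$ grows by exactly $t_j$ in the $j$-th simulation step, where $t_j$ is the number of times player~1 traverses the self-loop of the first state in that step. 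By Lemma~\ref{lem:LRNumRounds}, under the left-right invariants $t_j \in [G_s(1-2\epsilon),\, G_s(1+2\epsilon)]$.

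Next, I compute the per-step change in $c_+ - |g_c|$ by reading the weight labels: a NOP step adds $+t_j$ to $c_+$ and $+t_j$ to $|g_c|$ (net change $0$); an INC step adds $+2t_j$ and $+t_j$ (net $+t_j$); a DEC step adds $0$ and $+t_j$ (net $-t_j$). Summing over the first $k\le N$ simulation steps,
\[
c_+ - |g_c| \;=\; \bigl(c_+^0 - |g_c|^0\bigr) \;+\; \sum_{j \in I} t_j \;-\; \sum_{j \in D} t_j,
\]
where $I$, $D$ index the INC- and DEC-steps among the first $k$. The reset invariant gives $c_+^0 \leq (1+\epsilon/4)|g_c|^0$, and since $|g_c|$ is non-decreasing during the sim gadget, $c_+^0 - |g_c|^0 \leq (\epsilon/4)|g_c|$. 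Using the two-sided bound on $t_j$,
\[
\sum_{j \in I} t_j \;-\; \sum_{j \in D} t_j \;\leq\; G_s(1+2\epsilon)|I| - G_s(1-2\epsilon)|D| \;=\; c\,G_s + 2\epsilon G_s\bigl(|I|+|D|\bigr) \;\leq\; c\,G_s + 2\epsilon G_s N.
\]

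The remaining step is purely arithmetic: I need $(\epsilon/4)|g_c| + 2\epsilon G_s N \leq \epsilon|g_c| + G_s/2$. The first piece is clear from $\epsilon/4 \leq \epsilon$, and the second from the choice $\epsilon = 1/(N+1)^2$, which gives $2\epsilon N \leq 1/2$ (equivalent to $(N-1)^2 \geq 0$), so $2\epsilon G_s N \leq G_s/2$. Combining these gives the desired upper bound on $c_+$; the bound on $c_-$ is obtained by the symmetric calculation, swapping the roles of INC and DEC (the corresponding per-step changes are $0$, $-t_j$, $+t_j$ for NOP, INC, DEC, yielding $-c\,G_s$ in place of $+c\,G_s$).

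The main obstacle is not conceptual but lies in choosing the right ``potential'' to track and lining up the slack constants: the initial reset slack $\epsilon/4$, the per-step slack $2\epsilon$ from Lemma~\ref{lem:LRNumRounds}, and the cap $N$ on the number of steps must fit inside the generous $G_s/2$ term on the right-hand side. Once one sees that $c_+ - |g_c| - c\,G_s$ is the right quantity to control, the computation reduces to these estimates.
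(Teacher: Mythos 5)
Your proposal is correct and follows essentially the same route as the paper's proof: the reset invariant bounds the initial gap $c_+ - |g_c|$ by $\frac{\epsilon}{4}|g_c|$, Lemma~\ref{lem:LRNumRounds} bounds the per-step drift of that gap by $\pm|g_s|(1\pm 2\epsilon)$, and summing over at most $N$ steps with $\epsilon = \frac{1}{(N+1)^2}$ absorbs the accumulated slack into the $\frac{|g_s|}{2}$ term. Your explicit tracking of the potential $c_+ - |g_c|$ is a slightly cleaner packaging of the paper's identical accounting, but the decomposition and estimates are the same.
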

\begin{proof}
We prove the claim of the lemma for $c_+$ and the proof for $c_-$ is symmetric.
Let $X$ be the value of $|g_c|$ when the sim gadget is invoked.
By the reset invariants we get that $c_+\leq X(1+\frac{\epsilon}{4})$.
By Lemma~\ref{lem:LRNumRounds} we get that every visit in the $\INC$ gadget contributes at most $|g_s|(1+2\epsilon)$ more to $c_+$ than its contribution to $g_c$ and every visit in the $\DEC$ contributes at least $|g_s|(1-2\epsilon)$ more to to $g_c$ than its contribution to $c_+$.
Hence,
\begin{quote}
$c_+ \leq X(1+\frac{\epsilon}{4}) + (|g_c| - X) + \#\INC\cdot |g_s|(1+2\epsilon) - \#\DEC \cdot |g_s|(1-2\epsilon) =$\\*
$|g_c| + \epsilon X + (\#\INC - \#\DEC)|g_s|(1+2\epsilon) + 4\epsilon|g_s|\cdot\#\DEC$
\end{quote}

We recall that $c = (\#\INC - \#\DEC)$, and observe that $X \leq |g_c|$, and that $\#\DEC \leq N$ and thus $\epsilon \cdot\#\DEC < \frac{1}{10}$.
Hence, we get that $c_+ \leq |g_c|(1+\epsilon) + c|g_s| + \frac{|g_s|}{2}$.
\qed
\end{proof}
The next lemma suggest that player~1 must honestly simulates the zero-tests.
\begin{lem}\label{lem:ZeroTestMust}
If the reset and left-right invariants hold, then for $\gamma = \min(\frac{1}{20N},\frac{\delta}{8})$ the following hold: (i)~if the blame $c<0$ is invoked and $c<0$ then player~2 can achieve $c_+ \leq -\gamma$ while maintaining $\Avg(g_s),\Avg(g_c) \leq -\gamma$; and
(ii)~if the blame $c>0$ is invoked and $c>0$ then player~2 can achieve $c_- \leq -\gamma$ while maintaining $\Avg(g_s),\Avg(g_c) \leq -\gamma$.
\end{lem}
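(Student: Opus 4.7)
My plan is to prove case (i) in detail and observe that case (ii) follows by symmetry (exchange $c_+\leftrightarrow c_-$ and $\INC\leftrightarrow\DEC$). The main steps are: (a) determine the weights of $g_s, g_c$, and an upper bound on $c_+$ at the instant the blame $c{<}0$ gadget is invoked; (b) parameterize player~2's strategy in the blame gadget by the number $T$ of self-loop traversals; (c) choose $T$ so that the three averages $\Avg(c_+), \Avg(g_c), \Avg(g_s)$ are all at most $-\gamma$.

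For (a), let $R_0$ be the total number of rounds at the end of the reset gadget and $R_1$ the total number of rounds at the moment the blame $c{<}0$ gadget is invoked. In the reset gadget every self-loop contributes $-1$ to both $g_s$ and $g_c$, so $g_s=g_c=-R_0$ at time $R_0$; in the sim gadgets $g_s$ is never modified while $g_c$ is decremented by $1$ in every round, so at time $R_1$ we have $|g_s|=R_0$ and $|g_c|=R_1$. Combining Lemma~\ref{lem:CorrelationOfCountersDim} with the hypothesis $c\leq-1$ then gives $c_+ \leq R_1(1+\epsilon)-\tfrac{1}{2}R_0$.

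For (b) and (c), each loop in the blame $c{<}0$ gadget decrements $c_+$ by $1$ and increments $g_c$ by $1$ while leaving $g_s$ alone. After $T$ loops the total play length is $R_1+T$, $g_c=-(R_1-T)$, $g_s=-R_0$, and $c_+\leq R_1(1+\epsilon)-\tfrac{1}{2}R_0-T$. The three conditions $\Avg(c_+),\Avg(g_c),\Avg(g_s)\leq-\gamma$ are linear inequalities in $T$; the $g_s$ constraint is automatically implied by the $g_c$ one (since $T\leq R_1$ forces $R_1+T\leq 2R_1\leq 2R_0/\delta$, so $|\Avg(g_s)|\geq\delta/2\geq \gamma$), so it suffices to find $T$ in the interval
\[ \frac{R_1(1+\epsilon+\gamma)-R_0/2}{1-\gamma} \;\leq\; T \;\leq\; \frac{R_1(1-\gamma)}{1+\gamma}. \]

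The nontrivial step is to verify nonemptiness of this interval; I expect this elementary algebra to be the main technical obstacle. Using $R_1\leq R_0/\delta$ from Lemma~\ref{lem:lfInvariantsImpAvgGs}, clearing denominators, and dropping second-order terms reduces nonemptiness to an inequality of the shape $8\gamma + O(\epsilon) \leq \delta$, which follows from the hypotheses $\gamma\leq\delta/8$ and $\epsilon=1/(N+1)^2$ with $N>10$. Case (ii) follows by the symmetric argument: Lemma~\ref{lem:CorrelationOfCountersDim} applied with $c\geq 1$ yields the same bound $c_-\leq R_1(1+\epsilon)-\tfrac{1}{2}R_0$, and the analysis of the blame $c{>}0$ gadget (which decrements $c_-$ and increments $g_c$) is identical.
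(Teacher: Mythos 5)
Your overall plan is the paper's plan: bound $c_+$ at the moment the blame $c{<}0$ gadget is entered via Lemma~\ref{lem:CorrelationOfCountersDim} with $c\leq -1$, then loop in the blame gadget long enough to push $\Avg(c_+)$ below $-\gamma$ while keeping $\Avg(g_c)$ and $\Avg(g_s)$ below $-\gamma$ (the paper fixes one explicit loop count, $|g_c|(1+\epsilon)-|g_s|/4$, where you solve for an interval of admissible $T$; that difference is cosmetic). However, step~(a) contains a genuine error. You assert $g_s=g_c=-R_0$ at the end of the reset gadget and $|g_c|=R_1$ at the blame invocation, where $R_0,R_1$ are \emph{total} round counts from the beginning of the play. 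These identities hold only in the very first reset--sim cycle. In general the play has already traversed earlier sim and blame phases: in the sim gadget $g_s$ is untouched, and in the blame gadgets $g_s$ (respectively $g_c$) is \emph{incremented} on every loop, so after a few cycles one can have $|g_s|\ll R_0$ and $|g_c|<R_1$. Consequently the inequality $c_+\leq R_1(1+\epsilon)-\tfrac12 R_0$ does not follow from Lemma~\ref{lem:CorrelationOfCountersDim} (which gives $c_+\leq |g_c|(1+\epsilon)-|g_s|/2$, and $-|g_s|/2\leq -R_0/2$ fails when $|g_s|<R_0$), the identity $g_c=-(R_1-T)$ after $T$ blame loops is false, and your derivation of $|\Avg(g_s)|\geq\delta/2$ silently uses $|g_s|=R_0$ again.

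The repair is exactly the bookkeeping the paper does: work with $X=|g_c|$ and $Y=|g_s|$ directly, and replace your exact identities by the inequalities that the hypotheses actually supply --- $X\leq R_1$ (since $g_c$ changes by at most $1$ per round), $Y\geq\delta R_1$ (Lemma~\ref{lem:lfInvariantsImpAvgGs}, which needs the reset invariant $\Avg(g_s)\leq-\tfrac12$), and a lower bound on $X$ of the form $X\geq R_1/2$ coming from the reset invariant on $g_c$ (needed so that your upper bound on $T$, now $T(1+\gamma)\leq X-\gamma R_1$, stays positive). With those substitutions your interval-nonemptiness computation does reduce to an inequality of roughly the form $4\gamma+2\epsilon\leq\delta/2$, which holds for $\gamma=\min(\tfrac{1}{20N},\tfrac{\delta}{8})$, $\epsilon=\tfrac{1}{(N+1)^2}$, $N>10$; but you should carry that computation out rather than defer it, since it is where the specific choice of $\gamma$ is actually used.
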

\begin{proof}
We prove the first item of the lemma and the proof for the second item is symmetric.
Suppose that $c<0$ (i.e., $c\leq -1$) when blame $c<0$ gadget is invoked.
Let $X$ and $Y$ be the values of $|g_c|$ and $|g_s|$ before the invocation of the blame gadget. 
Then by Lemma~\ref{lem:CorrelationOfCountersDim}, before the invocation we have $c_+ \leq X(1+\epsilon) - \frac{Y}{2}$.
Hence, by traversing the self-loop in the first state of the blame $c<0$ gadget for $X(1 + \epsilon) - \frac{Y}{4}$ rounds we get $c_+ \leq - \frac{Y}{4}$ and $g_c \leq \epsilon X - \frac{Y}{4}$.
Let $R$ be the number of rounds that were played from the beginning of the play (and not just from the beginning of the current invocation of the sim gadget).
Since $g_c$ is decremented by at most $1$ in every round we get that $X(1 + \epsilon) - \frac{Y}{4} \leq 2X \leq 2R$.
By lemma~\ref{lem:lfInvariantsImpAvgGs} we have $\frac{Y}{R} \leq -\delta$.
Hence, $\Avg(c_+) \leq \frac{c_+}{2R} \leq -\frac{Y}{8R} \leq -\frac{\delta}{8}$.
Similarly, since $\frac{X}{R}$ is bounded by $1$, we have $\Avg(g_c) \leq \frac{\epsilon X}{2R} - \frac{\delta}{8} \leq \frac{\epsilon}{2} - \frac{\delta}{8}$.
Recall that $\delta = \frac{1}{\frac{1}{2} + N(1+2\epsilon)}$.
Hence, $\Avg(g_c)\leq \frac{2\epsilon + 4N\epsilon + 8\epsilon^2 - 1}{8(\frac{1}{2} + N(1+2\epsilon))}$ and since $\epsilon = \frac{1}{(N+1)^2}$ and $N > 10$ we get that $\Avg(g_c) \leq -\frac{1}{20N}$.
The value of $g_s$ was at most $-\delta R$ before the blame gadget, and in the blame gadget $g_s$ is decreased by $1$ in every round. Hence $\Avg(g_s) \leq -\delta$ and the proof follows by taking $\gamma = \min(\frac{1}{20N},\frac{\delta}{8})$.
\qed
\end{proof}

We are now ready to prove one side of the reduction.
\begin{prop}\label{prop:IfHaltsThenWin}
If the counter machine $M$ halts, then player~2 has a winning strategy for violating $(\Inf{\ell} \wedge \Inf{r} \vee\Sup{g_s})\wedge (\Inf{c_+}\wedge \Inf{c_-} \vee \Sup{g_c})\wedge \Sup{x}\wedge \Sup{y}$.
Moreover, if $M$ halts then there exists a constant $\zeta > 0$ that depends only on $M$ such that player~2 has a winning strategy for violating
$(\Inf{\ell} \geq -\zeta \wedge \Inf{r} \geq -\zeta  \vee\Sup{g_s} \geq -\zeta)\wedge (\Inf{c_+} \geq -\zeta \wedge \Inf{c_-}\geq -\zeta \vee \Sup{g_c} \geq -\zeta)\wedge \Sup{x}\geq -\zeta \wedge \Sup{y} \geq -\zeta$.
\end{prop}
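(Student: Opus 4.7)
The plan is to give player~2 an explicit strategy---``establish the reset invariants, act as a referee in the sim gadget, and punish every deviation with the matching blame gadget''---and then verify that every resulting play violates the objective by a margin depending only on $M$.

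I would define the strategy in three phases. In the reset gadget, player~2 loops on the initial state long enough to re-establish the reset invariants with slightly better slack than required by Lemmas~\ref{lem:lfInvariantsImpLf} and~\ref{lem:ZeroTestMust}. In the sim gadget, after each simulated step player~2 checks whether player~1 maintained the left-right invariants; on a zero-test, he additionally checks whether player~1's declared branch matches the true counter value $c$ (which is deterministically computable from the sequence of $\INC$/$\DEC$ gadgets already visited in this simulation). Upon any mismatch, player~2 enters the appropriate blame gadget and loops for exactly the count used in the proofs of Lemmas~\ref{lem:lfInvariantsImpLf} and~\ref{lem:ZeroTestMust}, then proceeds to reset. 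If no blame is ever invoked, the simulation is faithful and so reaches $q_f$ within $N$ steps, after which the play is trapped in the $x \gets -1$ self-loop at $q_f$.

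Next comes a case analysis. If $q_f$ is eventually reached then $\Sup{x} = -1$, which violates $\Sup{x} \geq -\zeta$ for any $\zeta < 1$. Otherwise some blame gadget is triggered infinitely often; by Lemmas~\ref{lem:lfInvariantsImpLf} and~\ref{lem:ZeroTestMust}, at every blame-exit moment we have $\Avg(g_s), \Avg(g_c) \leq -\gamma$ together with $\Avg(d) \leq -\gamma$ for some $d \in \{\ell, r, c_+, c_-\}$, where $\gamma > 0$ depends only on $M$. By pigeonhole at least one fixed $d^\star$ is blamed at infinitely many exits, so $\Inf{d^\star} \leq -\gamma$, making the corresponding clause $\Inf{\ell} \wedge \Inf{r}$ or $\Inf{c_+} \wedge \Inf{c_-}$ fail with margin $\gamma$.

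The main obstacle is propagating the pointwise bound ``$\Avg(g_s) \leq -\gamma$ at blame-exit'' into an actual $\Sup{g_s} \leq -\gamma'$ along the whole tail. I would resolve this by noting that $g_s$ is decremented only in reset, incremented only in blame (by at most a controlled fraction of the preceding decrement, as in the proof of Lemma~\ref{lem:lfInvariantsImpLf}), and untouched in sim; hence player~2 can choose reset lengths uniformly so that at every sim-exit the running $g_s$-average is at most $-\gamma/2$, and since a sim phase cannot raise the running average by more than $\epsilon$ once the reset has been long enough (by the same counting as in Lemma~\ref{lem:lfInvariantsImpAvgGs}), the running average stays below a fixed $-\gamma' > -\gamma/3$ throughout every sufficiently late prefix. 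The identical argument applies to $g_c$. Setting $\zeta$ to a small fixed fraction of $\min(1, \gamma, \gamma')$ then yields the uniform margin depending only on $M$ claimed in the ``moreover'' strengthening, which already implies the unqualified first assertion of the proposition.
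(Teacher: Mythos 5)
Your proposal follows the paper's own proof of Proposition~\ref{prop:IfHaltsThenWin} essentially verbatim in structure: the same player-2 strategy (maintain the reset invariants, invoke the matching blame gadget upon any violation of the left-right invariants or of a zero-test, and trap the play at $q_f$ otherwise), and the same case analysis in which either $q_f$ is reached and $\Sup{x}\leq -1$, or some blame gadget fires infinitely often and Lemmas~\ref{lem:lfInvariantsImpLf} and~\ref{lem:ZeroTestMust} give a guard dimension with $\LimSupAvg\leq-\gamma$ together with a blamed dimension with $\LimInfAvg\leq-\gamma$. Your explicit treatment of the pigeonhole step and of propagating the bound on $\Avg(g_s)$ through the sim and reset phases only spells out what the paper leaves implicit, so this is the same argument rather than a different route.
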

\begin{proof}
Suppose that $M$ halts and let $N$ be the number of steps that $M$ runs before it halts (for an initial counter value $0$).
Player-2 strategy is to
\begin{itemize}
\item Maintain the reset-invariants.
\item Whenever the left-right invariants are violated, he invokes the blame $r\to\ell$ or $\ell\to r$ gadget.
\item Whenever the zero-testing is dishonest, he activates the corresponding blame gadget (either $c>0$ or $c<0$).
\item If $q_f$ is reached, he stays there forever.
\end{itemize}
The correctness of the construction is immediate by the lemmas above.
We first observe that it is possible for player~2 to satisfy the reset-invariants and that if player~1 stays in the reset gadget forever, then he loses.

Whenever the left-right invariant is violated, then the average weight of $r$ and/or $\ell$ is negative, while the average weight of $g_s$ and $g_c$ remains negative.
Hence, if in every simulation player~1 violates the left-right invariants in the first $N$ steps we get that the condition is violated since $\Sup{g_s} \leq -\gamma$ and either $\Inf{r} \leq -\gamma$ or $\Inf{\ell} \leq -\gamma$.
Hence, we may assume that these invariants are kept in every simulation.

Whenever the zero-testing is dishonest (while the left-right invariants are satisfied), then by Lemma~\ref{lem:ZeroTestMust}, player~2 can invoke a counter blame gadget and achieve negative average for either $c_+$ or $c_-$ while maintaining $g_c$ and $g_s$ negative.
If in every simulation player~1 is dishonest in zero-testing, then we get that either $\Inf{c_-} \leq -\gamma$ or $\Inf{c_+} \leq -\gamma$ while $\Sup{g_c} \leq -\gamma$ and the condition is violated.
Hence, we may assume that player~1 honestly simulates the zero-tests.
Finally, if the transitions of $M$ are properly simulated, then it must be the case the state $q_f$ is reached and when looping this state forever player~1 loses (since $\Sup{x} \leq -1 < 0$).
\qed
\end{proof}
\subsection{If $M$ does not halt, then player~1 is the winner}\label{subsec:MDoesNotHalt}
Suppose that $M$ does not halt.
The strategy of player~1 in the reset gadget is as following:
Let $i$ be the number of times that the reset gadget was visited, and we denote $\epsilon_i = \frac{1}{i+10}$.
Similarly to player-2 strategy in Subsection~\ref{subsec:MHalts}, player-1 strategy in the reset gadget is to achieve the following invariants (after the play leaves the gadget):
\begin{itemize}
\item $\Avg(g_s), \Avg(g_c)\leq -\frac{1}{2}$
\item $(1- \frac{\epsilon_i}{4})|g_s| \leq r \leq (1 + \frac{\epsilon_i}{4})|g_s|$
\item $-\frac{\epsilon_i|g_s|}{4} \leq \ell \leq \frac{\epsilon_i|g_s|}{4}$
\item $(1-\frac{\epsilon_i}{4})|g_c| \leq c_+,c_- \leq (1 +\frac{\epsilon_i}{4})|g_c|$
\end{itemize}
To satisfy these invariants, he follows the self-loop of his first state until $\Avg(y) \geq 0$ and then follows the self-loop of the second state until the invariants are fulfilled and $\Avg(x) \geq 0$.
In the sim gadget, player-1 strategy is to simulate every $\NOP$,$\INC$ and $\DEC$ step by following the self-loop in the corresponding gadget for $|g_s|$ rounds. In addition, player~1 honestly simulates the zero-tests.

We denote the above player-1 strategy by $\tau$ and next lemma shows the basic properties of a play according to $\tau$.
\begin{lem}\label{lem:PropsPlayer1Stra}
In any play according to $\tau$, after the reset gadget was visited for $i$ times, in the sim gadget we always have:
(i)~In a right state: $r\geq -\epsilon_i|g_s|, \ell \geq (1-\epsilon_i)|g_s|$ and in a left state $\ell \geq -\epsilon_i|g_s|, r \geq (1-\epsilon_i)|g_s|$; and (ii)~$c_+ \geq (1-\epsilon_i)|g_c| + c|g_s|$ and $c_-\geq (1-\epsilon_i)|g_c| - c|g_s|$, where $c = \#\INC - \#\DEC$ in the current invocation of the sim gadget.
\end{lem}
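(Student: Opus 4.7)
My plan is to prove both parts simultaneously by induction on the number of simulation steps that have occurred within the current (the $i$-th) invocation of the sim gadget, where a simulation step means a complete traversal of a $\NOP$, $\INC$, or $\DEC$ gadget under the strategy $\tau$, i.e., exactly $|g_s|$ iterations of the first self-loop. The base case is the moment the play exits the reset gadget: the reset invariants built into $\tau$ give the strictly stronger bounds with slack $\epsilon_i/4$ instead of $\epsilon_i$ (and $c = 0$), which immediately imply both (i) and (ii) for the initial configuration.

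For the inductive step of (i), I would exploit two structural facts about the sim gadget: $g_s$ is never touched, and the sum $r + \ell$ is preserved across every transition gadget, since each self-loop shifts one unit from one of $r, \ell$ to the other. Because $\tau$ executes exactly $|g_s|$ self-loops per simulation step, a single step moves precisely $|g_s|$ units from the ``high'' coordinate to the ``low'' one. Consequently, the configuration satisfying the stated bound on one side is mapped to the configuration satisfying the bound for the opposite side with the $\epsilon_i/4$ slack inherited from the base case untouched, which is comfortably below the required $\epsilon_i$. Moreover, within the $|g_s|$ loops $r$ and $\ell$ vary monotonically between their two endpoint values, so the bounds hold throughout the step and not only at its boundary.

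For the inductive step of (ii), I would carry out a direct case analysis on the gadget type. A $\NOP$ step increases each of $c_+$ and $c_-$ by $|g_s|$, increases $|g_c|$ by $|g_s|$, and leaves $c$ unchanged; an $\INC$ step increases $c_+$ by $2|g_s|$ and $c$ by $1$ while leaving $c_-$ fixed; a $\DEC$ step is symmetric. Plugging the inductive hypothesis $c_\pm^{\text{old}} \geq (1 - \epsilon_i)|g_c|^{\text{old}} \pm c^{\text{old}}|g_s|$ into the updated quantities shows in each of the six sub-cases that $c_\pm^{\text{new}} \geq (1 - \epsilon_i)|g_c|^{\text{new}} \pm c^{\text{new}}|g_s|$ holds with a surplus of $\epsilon_i |g_s|$ on the right-hand side, so the bound is preserved step after step.

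The only real subtlety I anticipate is the extension from boundaries between simulation steps to intermediate rounds inside a single gadget traversal. This is handled by the monotone behavior of each relevant dimension during the $|g_s|$ self-loops: at any intermediate position each of $r, \ell, c_+, c_-, g_c$ lies between its two endpoint values, so the lower bound at an intermediate round is no worse than the minimum of the two endpoint bounds, both of which the induction already controls. Beyond this, the argument is pure bookkeeping once the conservation of $r + \ell$ and the ``exactly $|g_s|$ loops'' rule of $\tau$ are in hand.
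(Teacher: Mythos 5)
Your proof is correct and follows essentially the same route as the paper: the base case is the reset invariants (with slack $\epsilon_i/4$), item~(i) follows from the conservation of $r+\ell$ and the fact that $\tau$ shifts exactly $|g_s|$ units per simulation step, and item~(ii) is the same per-gadget bookkeeping that the paper merely expresses as a closed-form count, $c_+ \geq (1-\epsilon_i)K + |g_s|(2\#\INC+\#\NOP)$ with $|g_c| = K + |g_s|(\#\INC+\#\DEC+\#\NOP)$, rather than as an inductive invariant preserved with surplus $\epsilon_i|g_s|$ per step. The two formulations are interchangeable, so no further comparison is needed.
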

\begin{proof}
The proof of the first item is straight forward.
Initially, when the sim gadget is invoked (and starts in a left state) we have $-\epsilon_i |g_s| \leq \ell \leq \epsilon_i |g_s|$ and $(1-\epsilon_i) |g_s| \leq r \leq (1+\epsilon_i) |g_s|$.
In every left to right transition $r$ is decremented by $|g_s|$ and $\ell$ is incremented by $|g_s|$.
In every right to left transition $\ell$ is decremented by $|g_s|$ and $r$ is incremented by $|g_s|$.
Hence, the proof of the first item follows.

In order to prove the second item we denote by $K$ the value of $|g_c|$ when the sim gadget is invoked.
Initially we have $c_+ \geq (1-\epsilon_i)K$.
After every simulation step we have $c_+ \geq (1-\epsilon_i)K + |g_s|(2\#\INC + \#\NOP)$ and $|g_c| = K + |g_s|(\#\INC + \#\DEC + \#\NOP)$.
Hence, $c_+ \geq (1-\epsilon_i)|g_c| + c|g_s|$.
The proof for $c_-$ is symmetric.
\qed
\end{proof}

We will use the next lemma to prove that player~2 is losing when he invokes the blame gadgets.
\begin{lem}\label{lem:Stra1Blame}
In a play prefix consistent with $\tau$:
\begin{enumerate}
\item In the blame $r\to\ell$ gadget: whenever $\Avg(\ell)\leq -\epsilon_i$, then $\Avg(g_s)\geq -\epsilon_i$.
\item In the blame $\ell\to r$ gadget: whenever $\Avg(r)\leq -\epsilon_i$, then $\Avg(g_s)\geq -\epsilon_i$.
\item In the blame $c < 0$ gadget: whenever $\Avg(c_+) \leq -\epsilon_i$, then $\Avg(g_c) \geq -\epsilon_i$.
\item In the blame $c > 0$ gadget: whenever $\Avg(c_-) \leq -\epsilon_i$, then $\Avg(g_c) \geq -\epsilon_i$.
\end{enumerate}
Where $i$ is the number of times that the reset gadget was visited.
\end{lem}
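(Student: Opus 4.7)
The plan is to exploit the fact that in each of the four blame gadgets the only weight-carrying edge is the self-loop, which assigns $-1$ to the ``target'' coordinate and $+1$ to the corresponding ``guard'' coordinate; consequently the sum of these two coordinates is invariant throughout any traversal of the self-loop. Combined with Lemma~\ref{lem:PropsPlayer1Stra}, which pins down a lower bound on the target coordinate at the moment the gadget is entered, this yields a lower bound on the sum that holds at every intermediate round inside the gadget, from which the conclusion follows by a one-line arithmetic step.

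I walk through item~(1) in detail. Let $R$ denote the total number of rounds at the intermediate position considered, and let $S = |g_s|$ at the instant blame $r\to\ell$ is entered. Since this gadget is reached only after an $r\to\ell$ transition step has just been completed, the simulated machine is in a left state and Lemma~\ref{lem:PropsPlayer1Stra} gives $\ell \geq (1-\epsilon_i)S$ at entry; since $g_s = -S$ at that moment (recall that only the reset gadget decrements $g_s$ and $g_s = 0$ initially, so $g_s \leq 0$ before any blame self-loop is taken), this rearranges to $\ell + g_s \geq -\epsilon_i S$. Each iteration of the self-loop adds $(-1,+1)$ to the pair $(\ell, g_s)$, so the inequality survives at every round inside the gadget. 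If at some such round $\Avg(\ell) \leq -\epsilon_i$, i.e.\ $\ell \leq -\epsilon_i R$, then
\[
g_s \;\geq\; -\epsilon_i S - \ell \;\geq\; -\epsilon_i S + \epsilon_i R \;=\; \epsilon_i(R-S).
\]
Since $g_s$ is decremented only by reset edges, we have $S \leq R$, hence $g_s \geq 0$, and in particular $\Avg(g_s) \geq 0 \geq -\epsilon_i$.

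Item~(2) is obtained by swapping $\ell$ and $r$ throughout. Items~(3) and~(4) follow the same template with the pair $(c_+, g_c)$, respectively $(c_-, g_c)$, replacing $(\ell, g_s)$. The only extra ingredient is the $\pm c|g_s|$ term appearing in Lemma~\ref{lem:PropsPlayer1Stra}: since $\tau$ honestly simulates $M$ starting from counter value $0$, when blame $c<0$ is invoked the local counter $c = \#\INC - \#\DEC$ is what remains after an honest dec taken from a state in which player~1 declared $c \geq 1$, so $c \geq 0$; and when blame $c>0$ is invoked, player~1 has just honestly declared $c = 0$, so $c = 0$. In both cases the signed term $\pm c|g_s|$ is non-negative, Lemma~\ref{lem:PropsPlayer1Stra} yields the clean entry bound $c_{\pm} + g_c \geq -\epsilon_i |g_c|$, and the argument proceeds verbatim, using $|g_c|_{\text{entry}} \leq R$ (because $g_c$ is decremented only in reset edges and in the $\NOP$/$\INC$/$\DEC$ self-loops, each of which contributes at most one unit per round).

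No step is technically difficult. The main conceptual point is recognising the two-coordinate sum that is preserved by the self-loop of each blame gadget and combining it with the entry bound from Lemma~\ref{lem:PropsPlayer1Stra}; the only bookkeeping is the observation that $|g_s|$ and $|g_c|$ at entry never exceed the total number of rounds, which is immediate from the edge-weight tables of the gadgets.
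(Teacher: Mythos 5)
Your proof is correct and follows essentially the same route as the paper's: both rest on the entry bound from Lemma~\ref{lem:PropsPlayer1Stra} together with the $(-1,+1)$ self-loop weights of each blame gadget, and both reduce items~(3) and~(4) to the observation that honest simulation forces $c\geq 0$ (resp.\ $c=0$) so the $\pm c|g_s|$ term is harmless. Your packaging via the conserved sum $\ell+g_s$ (resp.\ $c_\pm+g_c$) is only a cosmetic variant of the paper's computation of the minimal number of self-loop iterations needed to drive the average down to $-\epsilon_i$.
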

\begin{proof}
Proof of item 1:
By Lemma~\ref{lem:PropsPlayer1Stra} we know that when blame $r\to\ell$ gadget is invoked, the value of $\ell$ is at least $(1-\epsilon_i)|g_s|$.
Let $R$ be the total number of rounds that were played before the blame gadget was invoked, and by $Y$ the value of $|g_s|$ before the invocation of the blame gadget.
In order to obtain $\Avg(\ell)\leq -\epsilon_i$, player~1 should follow the self-loop of the gadget at least $X=\frac{(1-\epsilon_i)Y+R\epsilon_i}{1-\epsilon_i}$ rounds.
In every such round the value of $g_s$ is incremented by $1$.
We note that $X\geq Y$.
Hence, after $X$ rounds we have $g_s = -Y + X \geq 0$ and in particular $\Avg(g_s)\geq -\epsilon_i$.

Proof of item 2: Symmetric to the proof of item 1.

Proof of item 3:
By Lemma~\ref{lem:Stra1Blame} we know that before the invocation of the blame gadget we have $c_+ \geq (1-\epsilon_i)|g_c| + c|g_s|$.
Since $\tau$ honestly simulates the zero-tests we get that $c\geq 0$.
Hence, $c_+ \geq (1-\epsilon_i)|g_c|$.
Thus, by the same arguments as in the proof of item 1 we get that if $\Avg(c_+)\leq -\epsilon_i$, then $\Avg(g_c) \geq 0 \geq -\epsilon_i$.

Proof of item 4:
By Lemma~\ref{lem:Stra1Blame} we know that before the invocation of the blame gadget we have $c_- \geq (1-\epsilon_i)|g_c| - c|g_s|$.
Since $\tau$ honestly simulates the zero-tests we get that $c = 0$.
The rest of the proof is symmetric to the proof of item 3.
\qed
\end{proof}

We are now ready to prove the $\tau$ is a winning strategy.
\begin{prop}\label{prop:IfNotHaltOneWin}
If $M$ does not halt, then $\tau$ is a winning strategy.
\end{prop}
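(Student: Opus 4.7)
The plan is to verify each conjunct of the winning objective separately for an arbitrary play $\pi$ consistent with $\tau$ against some player-2 strategy $\sigma$, splitting first on whether $\pi$ eventually cycles forever inside a single gadget. If so, inspecting the self-loop weights suffices: a forever-loop in the first state of the reset gadget drives the averages of $(r,\ell,c_+,c_-,g_s,g_c)$ to $(1,0,1,1,-1,-1)$, making $\Inf{\ell}\wedge\Inf{r}$ and $\Inf{c_+}\wedge\Inf{c_-}$ trivial; a forever-loop in a side blame drives $\Avg(g_s)\to 1$, and a forever-loop in a counter blame drives $\Avg(g_c)\to 1$; the remaining disjuncts are inherited from the last completed reset, and $\Sup{x}\wedge\Sup{y}$ holds in all such cases because $x,y$ are untouched and the time $t$ grows to infinity. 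Otherwise every gadget is exited; then $\Sup{x}\wedge\Sup{y}$ follows from $\tau$ enforcing $\Avg_t(x),\Avg_t(y)\ge 0$ at the end of every reset and $x,y$ being untouched in all other gadgets, so the sequence of reset-exit times witnesses $\LimSupAvg(x),\LimSupAvg(y)\ge 0$.

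The substantive step is the two main disjunctions; I focus on $(\Inf{\ell}\wedge\Inf{r})\vee\Sup{g_s}$ by contrapositive. Suppose $\LimSupAvg(g_s)<0$, so there exist $\delta>0$ and $T_0$ with $\Avg_t(g_s)\le -\delta$ for all $t\ge T_0$. Fix $\eta>0$; choose $i_0$ with $\epsilon_{i_0}<\min(\eta,\delta)$, and let $T_1\ge T_0$ be such that for $t\ge T_1$ either $i(t)\ge i_0$ (if player~2 blames infinitely often) or $|g_s|$ has frozen at some constant (otherwise). For $t\ge T_1$ I bound $\Avg_t(\ell)$ by a case split on the gadget containing $t$. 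In the sim gadget Lemma~\ref{lem:PropsPlayer1Stra} yields $\ell(t)\ge -\epsilon_{i(t)}|g_s(t)|$, so $\Avg_t(\ell)\ge -\epsilon_{i(t)}|g_s(t)|/t$, which tends to $0$ in both regimes (the factor $|g_s|/t$ vanishes in the finite-blame regime, while $\epsilon_{i(t)}$ vanishes in the infinite-blame regime). In the blame $r\to\ell$ gadget the contrapositive of Lemma~\ref{lem:Stra1Blame}(1), combined with $\Avg_t(g_s)\le -\delta<-\epsilon_{i(t)}$, forces $\Avg_t(\ell)>-\epsilon_{i(t)}\ge -\eta$. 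In every remaining gadget $\ell$ is frozen since its last modification time $t^*$, so $\Avg_t(\ell)=\ell(t^*)/t$; splitting into $t^*\ge T_1$ (where the preceding two bullets apply at $t^*$ and $t^*/t\le 1$ transfers the bound to $t$) and $t^*<T_1$ (where $|\ell(t^*)|\le t^*<T_1$ and $t\ge T_1/\eta$ give $|\ell(t^*)|/t\le\eta$) yields $\Avg_t(\ell)\ge -\eta$ in both sub-cases. Letting $\eta\downarrow 0$ gives $\LimInfAvg(\ell)\ge 0$; the bound for $r$ is symmetric, using item~2 of Lemma~\ref{lem:Stra1Blame}.

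The counter disjunction $(\Inf{c_+}\wedge\Inf{c_-})\vee\Sup{g_c}$ follows the same template with items~3 and~4 of Lemma~\ref{lem:Stra1Blame} and the second clause of Lemma~\ref{lem:PropsPlayer1Stra}. The extra input is that $\tau$'s honest zero-testing keeps the simulated counter $c\ge 0$, so $c|g_s|\le |g_c|/2$ in the sim gadget (each increment contributes $|g_s|$ sim-rounds to $|g_c|$ and one unit to $c$), giving $c_-\ge (\tfrac12-\epsilon_i)|g_c|$ and hence a non-negative asymptotic average. The main obstacle is the ``frozen-dimension'' third bullet above: transferring the lower bound on the running average of a dimension from the last modification time $t^*$ to the current time $t$ when the dimension is untouched. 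The trick is the recent/old split of $t^*$, which exploits either the eventual smallness of $\epsilon_i$ for recent $t^*$ or the dilution of a bounded numerator by the growing $t$ for old $t^*$.\qed
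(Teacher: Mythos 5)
Your proof is correct and rests on the same two pillars as the paper's (Lemma~\ref{lem:PropsPlayer1Stra} for the sim gadget, Lemma~\ref{lem:Stra1Blame} for the blame gadgets) and the same dichotomy between finitely and infinitely many blame invocations; the difference is organizational. The paper proves, for every $\delta>0$, the $\delta$-weakened condition and lets $\delta\to 0$, whereas you argue each disjunction by contrapositive (assume $\Sup{g_s}<0$ with margin $\delta$, deduce $\Inf{\ell},\Inf{r}\geq 0$); your version avoids the slightly delicate ``disjunction holds for every $\delta$, hence at $0$'' step and makes explicit the frozen-dimension bookkeeping (transferring the average bound from the last modification time $t^*$ to $t$), which the paper only gestures at. One spot deserves care: your counter argument hinges on $c|g_s|\leq|g_c|/2$, but the justification you give (each increment contributes $|g_s|$ rounds to $|g_c|$) only yields $c|g_s|\leq|g_c|$, hence $c_-\geq-\epsilon_i|g_c|$, which does \emph{not} suffice in the finite-blame regime where $\epsilon_i$ stabilizes at a positive constant and $|g_c|/t$ does not vanish (unlike $|g_s|$, the guard $g_c$ keeps decreasing inside sim). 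The factor $2$ is in fact true, but for a different reason: the two-sided machine strictly alternates left and right states starting from a left state, so at most half of the simulation steps are right-to-left, hence $\#\INC\leq S/2$ and $c|g_s|\leq S|g_s|/2\leq|g_c|/2$. Alternatively, you can bypass Lemma~\ref{lem:CorrelationOfCountersDim}-style bounds entirely, as the paper does in its first case: $c_+$ and $c_-$ are non-negative when the reset gadget is exited and are never decremented inside the sim gadget, so their running averages are non-negative at every round spent there. A second, cosmetic point: at the exit of the reset gadget only $\Avg(x)\geq 0$ is guaranteed ($y$ is decremented in player~1's second state), so the witnessing times for $\Sup{y}\geq 0$ are the exits of the \emph{first} player-1 state, and when the reset gadget is visited only finitely often the claim $\Sup{x},\Sup{y}\geq 0$ follows instead from $x,y$ being frozen on the infinite sim suffix.
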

\begin{proof}
In order to prove that $\tau$ satisfies the condition $(\Inf{\ell} \wedge \Inf{r} \vee\Sup{g_s})\wedge (\Inf{c_+}\wedge \Inf{c_-} \vee \Sup{g_c})\wedge \Sup{x}\wedge \Sup{y}$ it is enough to prove that when playing according to $\tau$, for any constant $\delta > 0$ the condition
$(\Inf{\ell}\geq -\delta \wedge \Inf{r} \geq -\delta \vee\Sup{g_s} \geq -\delta )\wedge (\Inf{c_+}\geq -\delta \wedge \Inf{c_-}\geq -\delta \vee \Sup{g_c}\geq -\delta)\wedge \Sup{x}\wedge \Sup{y}$ is satisfied.

Let $\delta > 0$ be an arbitrary constant and in order to prove the claim we consider two distinct cases.

In the first case, player~2 strategy will invoke the blame gadgets only finitely many times. Hence, either there is a suffix that is played only in a blame gadget or only in a reset gadget (and in such suffixes player~2 loses) or
there is an infinite suffix that is played only in the sim gadget (as $M$ does not halt).
In the sim gadget the values of $x,y$ and $g_s$ are not changed. Hence the long-run average weight of these dimensions is $0$.
In addition, $c_+$ and $c_-$ are never decremented in the sim gadget.
Hence, their long-run average weight is also at least $0$.
Thus, the condition is satisfied.

In the second case we consider, player~2 always eventually invokes a blame gadget.
Since a blame gadget is invoked infinitely many times we get that the reset gadget is invoked infinitely often, and thus $\Sup{x},\Sup{y} \geq 0$.
In addition, the sim gadget is invoked infinitely often.
Let $i$ be the minimal index for which $\epsilon_i \leq \delta$.
We claim that after the $i$-th invocation of the sim gadget, in every round (i)~either $\Avg(\ell)\geq -\epsilon_i \wedge \Avg(r)\geq -\epsilon_i$ or $\Avg(g_s)\geq -\epsilon_i$; and (ii)~either $\Avg(c_+)\geq -\epsilon_i \wedge \Avg(c_+)\geq -\epsilon_i$ or $\Avg(g_c) \geq -\epsilon_i$.
The proof for the first item follows from the reset invariants, from Lemma~\ref{lem:PropsPlayer1Stra} (as $r,\ell \geq -\epsilon_i |g_s|$ and $|\Avg(g_s)| \leq 1$) and from Lemma~\ref{lem:Stra1Blame} (and from the fact that $r$ and $\ell$ are never decremented in the reset gadget).
The proof for the second item follows from the reset invariants, from the fact that $c_+$ and $c_-$ are never decremented in the sim gadget and from Lemma~\ref{lem:Stra1Blame} (and from the fact that $c_+$ and $c_-$ are never decremented in the reset gadget).
Thus, from certain round, either $\Avg(\ell)$ and $\Avg(r)$ are always at least $-\delta$, or infinitely often $\Avg(g_s) \geq -\delta$.
Hence, $(\Inf{\ell}\geq -\delta \wedge \Inf{r} \geq -\delta \vee\Sup{g_s} \geq -\delta )$ is satisfied and similarly $(\Inf{c_+}\geq -\delta \wedge \Inf{c_-}\geq -\delta \vee \Sup{g_c}\geq -\delta)$ is satisfied.
Therefore, since $\epsilon_i \leq \delta$, we get that $\tau$ satisfies $(\Inf{\ell}\geq -\delta \wedge \Inf{r} \geq -\delta \vee\Sup{g_s} \geq -\delta )\wedge (\Inf{c_+}\geq -\delta \wedge \Inf{c_-}\geq -\delta \vee \Sup{g_c}\geq -\delta)\wedge \Sup{x}\wedge \Sup{y}$.
\qed
\end{proof}
\subsection{Extending the reduction to two-counter machine}\label{subsec:TwoCounter}
When $M$ is a two-counter machine, we use 4 dimensions for the counters, namely $c_+^1,c_-^1,c_+^2,c_-^2$ and one guard dimension $g_c$.
The winning condition is $(\Inf{\ell} \wedge \Inf{r} \vee\Sup{g_s})\wedge (\Inf{c_+^1}\wedge \Inf{c_-^1}\wedge\Inf{c_+^2}\wedge \Inf{c_-^2} \vee \Sup{g_c})\wedge \Sup{x}\wedge \Sup{y}$.
In a $\NOP$ gadget all four dimensions $c_+^1,c_-^1,c_+^2,c_-^2$ get a value of $1$ in the self-loop.
When a counter $c_i$ (for $i=1,2$) is incremented (resp., decremented), then counter $c_+^i$ and $c_-^i$ are assigned with weights according to the weights of $c_+$ and $c_-$ in the $\INC$ ($\DEC$) gadget that we described in the reduction for a one counter machine, and $c_+^{3-i},c_-^{3-i}$ are assigned with weights according to a $\NOP$ gadget.

The proofs of Proposition~\ref{prop:IfHaltsThenWin} and Proposition~\ref{prop:IfNotHaltOneWin} easily scale to a two-counter machine.
Hence, the undecidability result is obtained.
\begin{thm}\label{thm:Undec}
The problem of deciding the winner in a robust multidimensional mean-payoff game with ten dimensions is undecidable.
\end{thm}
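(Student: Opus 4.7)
The plan is to reduce the halting problem for deterministic two-counter (Minsky) machines, which is well-known to be undecidable, to the problem of deciding the winner in a robust multidimensional mean-payoff game. The overall reduction is exactly the one sketched in Subsection~\ref{subsec:TwoCounter}: given a two-sided two-counter machine $M$, build a game graph whose states encode the control states of $M$, whose reset gadget initialises the values of ten dimensions $\ell,r,g_s,c_+^1,c_-^1,c_+^2,c_-^2,g_c,x,y$, whose sim gadget contains $\NOP$, $\INC$ and $\DEC$ subgadgets (one per instruction of $M$) in which the weight assignments are the natural generalisation of those used in the one-counter reduction, and whose blame gadgets are as in Figures~\ref{fig:BlameCl0}--\ref{fig:BlameLtR} but with two blame gadgets per counter. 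The winning objective is $(\Inf{\ell} \wedge \Inf{r} \vee \Sup{g_s}) \wedge (\Inf{c_+^1}\wedge \Inf{c_-^1}\wedge\Inf{c_+^2}\wedge \Inf{c_-^2} \vee \Sup{g_c}) \wedge \Sup{x} \wedge \Sup{y}$, and a direct count verifies that exactly ten dimensions suffice.

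Next I would verify that the two key correctness propositions transfer. For the ``halts implies player~2 wins'' direction, player~2 uses exactly the strategy from Proposition~\ref{prop:IfHaltsThenWin}: maintain the reset invariants, invoke a side blame whenever the left-right invariants fail, and invoke a counter blame whenever a zero-test is simulated dishonestly. The three supporting lemmas lift essentially verbatim: Lemma~\ref{lem:lfInvariantsImpAvgGs} about $\Avg(g_s)$ is unchanged since it only concerns $\ell,r,g_s$; Lemma~\ref{lem:LRNumRounds} is unchanged; and the counter correlation lemma (Lemma~\ref{lem:CorrelationOfCountersDim}) and the zero-test lemma (Lemma~\ref{lem:ZeroTestMust}) need to be stated separately for each index $i\in\{1,2\}$ with the bookkeeping that in an instruction on counter $c_i$ the dimensions $c_+^{3-i},c_-^{3-i}$ track a $\NOP$. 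The crucial observation is that the single shared guard $g_c$ still dominates both counters simultaneously, because in every simulation step each of $c_+^i,c_-^i$ is incremented by exactly as much as (or at most one more than) $g_c$ is decremented; hence the same argument bounding $\Avg(c_+^i)$ against $\Avg(g_c)$ in the blame gadget goes through unchanged for either counter.

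For the ``does not halt implies player~1 wins'' direction, the strategy $\tau$ from Proposition~\ref{prop:IfNotHaltOneWin} extends in the obvious way: honestly simulate $M$, maintain the reset invariants with the same schedule $\epsilon_i = \frac{1}{i+10}$, and in each simulation step loop $|g_s|$ times through the chosen subgadget. The analogue of Lemma~\ref{lem:PropsPlayer1Stra} now gives, for $j\in\{1,2\}$, the bounds $c_+^j \geq (1-\epsilon_i)|g_c| + c_j |g_s|$ and $c_-^j \geq (1-\epsilon_i)|g_c| - c_j |g_s|$ where $c_j$ is the current value of counter $j$ in the honest simulation; honesty of zero-tests gives $c_j \geq 0$ whenever the corresponding blame gadget is entered, so Lemma~\ref{lem:Stra1Blame} extends coordinate-wise and the case analysis of Proposition~\ref{prop:IfNotHaltOneWin} applies verbatim to each conjunct of the enlarged objective.

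The only place one must pause is the sharing of $g_c$ across both counters: one must check that when player~2 is invoked for a blame on counter $c_1$, the bookkeeping for the $c_2$-dimensions does not secretly force $\Sup{g_c}\geq 0$ (and symmetrically). This is where the choice of weights in the $\NOP$ steps of the unmodified counter matters, since those steps contribute equally to $c_+^{3-i}$, $c_-^{3-i}$ and $g_c$; the verification amounts to re-reading the proof of Lemma~\ref{lem:CorrelationOfCountersDim} with the extra terms included, and I expect this arithmetic check to be the main (entirely routine) obstacle. Once it is in place, undecidability of the two-counter halting problem yields Theorem~\ref{thm:Undec}.
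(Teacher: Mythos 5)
Your proposal follows exactly the paper's own route: the reduction of Subsection~\ref{subsec:TwoCounter} with four counter dimensions sharing one guard $g_c$, the ten-dimension count, and the observation that Propositions~\ref{prop:IfHaltsThenWin} and~\ref{prop:IfNotHaltOneWin} scale to two counters. In fact you supply more detail than the paper (which dispatches the scaling in one sentence), and your flagged check on the shared guard is precisely the right routine verification.
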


\section{Discussion}\label{sec:Discuss}
\Heading{Additional intuition.}
Previous undecidability results for mean-payoff games relied on a reduction from the universality problem of a non-deterministic weighted automaton over finite-words.
In such setting, the input for the automaton is a run of a counter machine and the automaton verifies that the run is valid.
The non-determinism allows the automaton to decide after the run ends whether (i)~it simulates the first counter or the second counter; and (ii)~whether a dishonest zero-test was done for a positive counter or for a zero counter.

In our reduction player~2 must identify dishonest simulation during the run and he does not know in advance whether a dishonest zero-test will be done for the first counter or for the second counter.
Hence, if we would apply the standard technique, then one run that is dishonest for the first counter might \emph{fix} the mean-payoff of a previous run that was dishonest for the second counter.
For this purpose we introduce the reset gadget that in a way \emph{clears} the affect of previous simulations.
However, with the reset gadget the affect of the rounds that are played in the simulation gadget becomes neglectable.
For this purpose we introduce the notion of left-to-right and right-to-left transitions who make sure that every simulation step takes a non-neglectable fraction of the run.
Both the reset gadget and the blame gadgets crucially relies on the combination of limit-supremum-average and limit-infimum-average objectives.
Indeed, games over the condition $(\Inf{\ell} \wedge \Inf{r} \vee\Inf{g_s})\wedge (\Inf{c_+^1}\wedge \Inf{c_-^1}\wedge\Inf{c_+^2}\wedge \Inf{c_-^2} \vee \Inf{g_c})\wedge \Inf{x}\wedge \Inf{y}$ or the condition
$(\Sup{\ell} \wedge \Sup{r} \vee\Sup{g_s})\wedge (\Sup{c_+^1}\wedge \Sup{c_-^1}\wedge\Sup{c_+^2}\wedge \Sup{c_-^2} \vee \Sup{g_c})\wedge \Sup{x}\wedge \Sup{y}$ are decidable~\cite{VelnerR11}.

\Heading{Undecidability for Similar Objectives}
Alur et al~\cite{Alur:2009:ODM:1532848.1532880} considered the same objectives as in this work.
Boker et al.~\cite{BokerCHK11} and Tomita et al.~\cite{TomitaHHY12} extended Alur et al objectives with boolean temporal logic. Hence, the undecidability result trivially holds for their model.

Chatterjee et al.~\cite{mean-payoff-Automaton-Expressions} considered a quantitative objective that assigns a (one-dimensional) real value to the vectors $\LimInfAvg(\pi)$ and $\LimSupAvg(\pi)$.
In their framework an atomic expression is $\LimInfAvg_i$ and $\LimSupAvg_i$ and a \emph{mean-payoff expression} is the closure under the $\min,\max$ and sum operators and under the numerical complement (multiplication by $-1$).
For example, the value of the expression $E = \LimInfAvg_1 + \max(\LimInfAvg_2,\LimSupAvg_3)$ for an infinite path $\pi$ is $\LimInfAvg_1(\pi) + \max(\LimInfAvg_2(\pi),\LimSupAvg_3(\pi))$.
As the objective is quantitative and not boolean, there are three problem that are relevant to games for a given expression $E$:
(i)~Can player~1 assure $E\geq \nu$?; (ii)~Can player~1 assure $E > \nu$?; and (iii)~What is the maximal (supremum) value the player~1 can assure?
In~\cite{Velner14} we proved that when player~1 is restricted to finite-memory strategies, then the decidability of the first problem is inter-reducible with Hilbert's Tenth problem over rationals (a long standing open problem),
but the second problem is decidable and the third problem is computable.
For arbitrary (infinite-memory) strategies we can easily extend our undecidability proof from Section~\ref{sec:Reduction} and~\ref{sec:DetailedProof} and get the next theorem:
\begin{thm}\label{thm:MPExpAreUndec}
Two-player games over mean-payoff expression are undecidable.
\end{thm}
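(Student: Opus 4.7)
The plan is to reduce the robust multidimensional mean-payoff game from Theorem~\ref{thm:Undec} directly to the threshold problem for mean-payoff expressions, by translating the boolean winning condition into a single expression whose value is $\geq 0$ precisely on the winning plays. Since the condition in Theorem~\ref{thm:Undec} is a positive boolean combination of atoms of the form $\Inf{i} \geq 0$ and $\Sup{i} \geq 0$, and since for any reals $a, b$ the equivalences $a \geq 0 \wedge b \geq 0 \iff \min(a,b) \geq 0$ and $a \geq 0 \vee b \geq 0 \iff \max(a,b) \geq 0$ hold, I can replace $\wedge$ by $\min$ and $\vee$ by $\max$ throughout the formula.

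Concretely, for the ten-dimensional game used for the two-counter reduction I define the mean-payoff expression
\begin{equation*}
E \;=\; \min\bigl(\max(\min(\Inf{\ell},\Inf{r}),\Sup{g_s}),\;\max(\min(\Inf{c_+^1},\Inf{c_-^1},\Inf{c_+^2},\Inf{c_-^2}),\Sup{g_c}),\;\Sup{x},\;\Sup{y}\bigr),
\end{equation*}
which only uses the atoms $\LimInfAvg_i$, $\LimSupAvg_i$ and the operators $\min,\max$, and is therefore a legal mean-payoff expression in the sense of Chatterjee et al.~\cite{mean-payoff-Automaton-Expressions} (the numerical-complement operator is not even required). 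For any infinite play $\pi$ on the game graph of Theorem~\ref{thm:Undec}, $E(\pi) \geq 0$ if and only if $\pi$ satisfies the robust boolean condition of that theorem. Consequently, playing the same game graph with the quantitative objective $E$, player~1 can assure $E \geq 0$ if and only if the two-counter machine $M$ does not halt. By Theorem~\ref{thm:Undec} this is undecidable, which proves the claim for the first problem (the $\geq \nu$ threshold problem) with $\nu = 0$.

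The only real step is to verify the translation; the rest is inherited from Theorem~\ref{thm:Undec}. The main thing to be careful about is the syntactic class: I must confirm that mean-payoff expressions as defined in~\cite{mean-payoff-Automaton-Expressions} admit both $\LimInfAvg_i$ and $\LimSupAvg_i$ as atoms and are closed under $\min$ and $\max$, which is exactly the definition quoted in the excerpt. Since no further constructions are needed beyond the game graph already built in Sections~\ref{sec:Reduction}--\ref{sec:DetailedProof}, no new combinatorial obstacle arises, and the result follows.
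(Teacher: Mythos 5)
Your translation of the boolean condition into a single expression via $\wedge\mapsto\min$ and $\vee\mapsto\max$ is exactly the construction in the paper, and your argument correctly establishes undecidability of the weak-threshold problem ``can player~1 assure $E\geq 0$?''. The gap is that the paper's theorem is meant to cover all three problems it enumerates for expression games --- the $\geq\nu$ threshold, the $>\nu$ threshold, and computing the optimal value --- and these are genuinely distinct questions here: in the finite-memory setting of~\cite{Velner14} the second is decidable and the third computable while the first is equivalent to Hilbert's Tenth problem over the rationals, so undecidability does not automatically transfer from one to the others. Your proof addresses only the first. The paper dispatches the strict-inequality problem by also taking the numerical complement $F=-E$ (this is where the complement operator of~\cite{mean-payoff-Automaton-Expressions} is actually needed, contrary to your remark that it is not required): since $F(\pi)>0$ iff $\varphi$ fails, and since by Propositions~\ref{prop:IfHaltsThenWin} and~\ref{prop:IfNotHaltOneWin} one of the two players always wins the robust game according to whether $M$ halts, deciding whether player~2 can assure $F>0$ would decide halting. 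It dispatches the value problem using the quantitative strengthening built into Proposition~\ref{prop:IfHaltsThenWin}: if $M$ halts, player~2 defeats even the $\zeta$-relaxed condition, so the supremum value player~1 can assure for $E$ is at most $-\zeta<0$, whereas it is at least $0$ if $M$ does not halt; hence computing (or even approximating the sign of) the value is undecidable. If you intend the theorem only for the $\nu=0$ weak-threshold problem, your argument is complete and identical to the paper's; for the full statement you need these two further observations.
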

\begin{proof}
Consider the construction from Section~\ref{sec:Reduction} and the expressions
\\$E_1 = \max(\min(\LimInfAvg_\ell,\LimInfAvg_r),\LimSupAvg_{g_s})$,
\\$E_2 = \max(\min(\LimInfAvg_{c_+^1},\LimInfAvg_{c_-^1},\LimInfAvg_{c_+^2},\LimInfAvg_{c_-^2}),\LimSupAvg_{g_c})$,
$E_3 = \min(\LimSupAvg_x,\LimSupAvg_y)$, $E = \min(E_1,E_2,E_3)$ and $F = -E$,
and the boolean formula $\varphi = (\Inf{\ell} \wedge \Inf{r} \vee\Sup{g_s})\wedge (\Inf{c_+^1}\wedge \Inf{c_-^1}\wedge\Inf{c_+^2}\wedge \Inf{c_-^2} \vee \Sup{g_c})\wedge \Sup{x}\wedge \Sup{y}$.
It is a trivial observation that for any play $\pi$:
(i)~$\varphi$ is satisfied iff $E(\pi)\geq 0$; and (ii)~$\varphi$ is not satisfied iff $F(\pi)>0$.
Hence, we have a reduction from determining whether player~1 wins for robust multidimensional mean-payoff games to mean-payoff expression games with weak inequality, and a reduction from determining whether player~2 wins for robust multidimensional mean-payoff games to mean-payoff expression games with strong inequality.
Moreover, by Proposition~\ref{prop:IfHaltsThenWin} it follows that if player~1 cannot satisfy $E \geq 0$, then for some $\zeta > 0$ he cannot satisfy $E \geq -\zeta$.
Hence if we could compute the maximal (supremum) value that player~1 can assure we could determine whether he wins for $E \geq 0$ (if the value is negative, then he loses, and otherwise he wins). \qed
\end{proof}

\bibliography{mpbibtex}{}
\bibliographystyle{plain}

\end{document}